\newcommand{\bvec}[1]{\mathbf{#1}}
\newcommand{\CC}{\mathbb{C}}
\newcommand{\vg}{\bvec{g}}
\newcommand{\vk}{\bvec{k}}
\newcommand{\vq}{\bvec{q}}
\newcommand{\vr}{\bvec{r}}
\newcommand{\vu}{\bvec{u}}
\newcommand{\vG}{\bvec{G}}
\newcommand{\vR}{\bvec{R}}
\newcommand{\vzero}{\boldsymbol{0}}
\newcommand{\diag}{\operatorname{diag}}
\renewcommand{\Re}{\operatorname{Re}}
\newcommand{\Tr}{\operatorname{Tr}}
\newcommand{\mc}[1]{\mathcal{#1}}
\newcommand{\ud}{\,\mathrm{d}}
\definecolor{purp}{RGB}{160, 32, 240}
\tikzset{%
  highlight/.style={rectangle,rounded corners,fill=blue!15,draw,fill opacity=0.3,thick,inner sep=0pt}
}
\global\long\def\R{\mathbb{R}}
\global\long\def\Tr{\mathrm{Tr}}
\title[Hartree-Fock ground state in magic angle graphene]{On the Hartree-Fock Ground State Manifold in Magic Angle Twisted Graphene Systems}
\author[K. D. Stubbs]{Kevin D. Stubbs}
\email{kstubbs@berkeley.edu}
\address{Department of Mathematics, University of California, Berkeley, CA 94720, USA}
\author[S. Becker]{Simon Becker}
\email{simon.becker@math.ethz.ch}
\address{ETH Zurich, 
Institute for Mathematical Research, 
Rämistrasse 101, 8092 Zurich, 
Switzerland}
\author[L. Lin]{Lin Lin}
\email{linlin@math.berkeley.edu}
\address{Department of Mathematics, University of California, Berkeley, CA 94720, USA; Applied Mathematics and Computational Research Division, Lawrence Berkeley National Laboratory, Berkeley, CA 94720, USA}
\newtheorem{result}{Result}
\newtheorem{lemma}{Lemma}[section]
\newtheorem{assumption}{Assumption}
\newtheorem{remark}{Remark}
\newtheorem{proposition}{Proposition}
\newtheorem{corollary}{Corollary}
\newtheorem*{conjecture*}{Conjecture}
\newtheorem{theorem}{Theorem}
\begin{document}

\maketitle

\begin{abstract}

Recent experiments have shown that magic angle twisted bilayer graphene (MATBG) can exhibit correlated insulator behavior at half-filling.
Seminal theoretical results towards understanding this phase in MATBG has shown that Hartree-Fock ground states (with a positive charge gap) can be exact many-body ground states of an idealized flat band interacting (FBI) Hamiltonian.  We prove that in the absence of spin and valley degrees of freedom, the only Hartree-Fock ground states of the FBI Hamiltonian for MATBG are two ferromagnetic Slater determinants. Incorporating spin and valley degrees of freedom, we provide a complete characterization of the Hartree-Fock ground state manifold, which is generated by a ${\rm U}(4) \times {\rm U}(4)$ hidden symmetry group acting on five elements. We also introduce new tools for ruling out translation symmetry breaking in the Hartree-Fock ground state manifold, which may be of independent interest.
\end{abstract}

\section{Introduction}
\label{sec:intro}
Over the past few years, ``magic angle'' twisted bilayer graphene (MATBG) has attracted immense attention in the condensed matter physics community due, in part, to recent experiments which demonstrate correlated insulating \cite{NatSaito,XieLianJackEtAl2019} and superconducting phases \cite{2018Nature}. The flat band interacting (FBI) Hamiltonian at the chiral limit, which is an idealized model of MATBG at the moir\'e scale\footnote{For MATBG, each moir\'e unit cell contains around $10^4$ carbon atoms. Throughout the paper,  details at the atomic scale are suppressed, and the focus is solely on the periodic structure at the moir\'e  scale.} \cite{TarnopolskyKruchkovVishwanath2019}, suggests that the ground state is both simple (at least in some regimes) and has a rich structure \cite{BultinckKhalafLiuEtAl2020,ChatterjeeBultinckZaletel2020,SoejimaParkerBultinckEtAl2020,XieMacDonald2020,WuSarma2020,DasLuHerzog-Arbeitman2021,BernevigSongRegnaultEtAl2021,LiuKhalafLee2021,SaitoGeRademaker2021,JiangLaiWatanabe2019,PotaszXieMacDonald2021,LiuKhalafLeeEtAl2021,FaulstichStubbsZhuEtAl2023,nuckolls2023quantum,wagner2022global,zhou2024kondo,xie2023phase,tseng2022anomalous}.
 On one hand, the Hartree-Fock states, which are the simplest quantum many-body states, can be  \emph{exact ground states} of the FBI Hamiltonian at half-filling\footnote{At other integer fillings, while Hartree-Fock states can be exact eigenstates of the FBI Hamiltonian, they may not be ground states.}. These quantum states are ferromagnetic, meaning the density matrix behaves identically at all $\vk$ points. On the other hand, the FBI Hamiltonian with spin and valley\footnote{Monolayer graphene features two nonequivalent Dirac points within the Brillouin zone. Near each Dirac point there is a small region, referred to as a valley. In MATBG, the wavefunction can be supported on two valleys simultaneously, referred to as intervalley coherent (IVC) states.} degrees of freedom exhibits a large ${\rm U}(4) \times {\rm U}(4)$ ``hidden symmetry'' \cite{BultinckKhalafLiuEtAl2020,LianSongRegnaultEtAl2021}. This implies that the ground state of the FBI Hamiltonian can, at most, be determined up to this symmetry. To the best of our knowledge, whether the ferromagnetic ground states (subject to a ${\rm U}(4) \times {\rm U}(4)$ symmetry) are the only Hartree-Fock ground states at half-filling remains an open question.

\subsection{Main Results}
\label{sec:main-results}
In this paper, we characterize the Hartree-Fock ground states of the flat band interacting (FBI) Hamiltonian at half-filling, which is derived from the Bistritzer-MacDonald model~\cite{BistritzerMacDonald2011,WatsonKongMacDonaldEtAl2022,CancesGarrigueGontier2023} at the chiral limit \cite{TarnopolskyKruchkovVishwanath2019,WatsonLuskin2021,BeckerEmbreeWittstenEtAl2021,BeckerEmbreeWittstenEtAl2022}.
 We refer the readers to~\cref{sec:flat-band-inter} for a brief review of the FBI Hamiltonian or \cite[Section 2]{BeckerLinStubbs2023} for a more complete review. 
Ref.~\cite{BeckerLinStubbs2023} proves that in a simplified model of MATBG without spin or valley degrees of freedom, among the Hartree-Fock at half-filling that are (1) uniformly filled, and (2) translation invariant, there are only two possible ground states called the ferromagnetic Slater determinant states.

This paper extends the findings of \cite{BeckerLinStubbs2023} in two  ways. First, we prove that in the absence of spin or valley degrees of freedom in MATBG, the two ferromagnetic Slater determinant states are the only Hartree-Fock ground states at half-filling. In particular, the ``uniform filling'' and ``translation invariant'' restrictions have been lifted. 
Second, when incorporating spin and valley degrees of freedom in MATBG, the manifold of Hartree-Fock ground states at half-filling is characterized by a \({\rm U}(4) \times {\rm U}(4)\) symmetry applied to five unique ferromagnetic Slater determinant states. This provides a complete characterization of the Hartree-Fock ground states at half-filling in MATBG.

Furthermore, the FBI Hamiltonian model can be generalized to other models of twisted bilayer graphene and twisted multilayer graphene. Specifically, the MATBG model above refers to the chiral limit twisted bilayer graphene Hamiltonian with two flat bands at zero energy (TBG-2) at a magic angle.  Ref.~\cite{BeckerLinStubbs2023} also analyzes  the Hartree-Fock ground state for a general class of FBI Hamiltonians under certain symmetry and non-degeneracy conditions. The non-degeneracy conditions in \cite{BeckerLinStubbs2023} are generally computationally intractable. For twisted bilayer graphene with four flat bands at a magic angle (TBG-4) and equal twist angle trilayer graphene with four flat bands (eTTG-4), the non-degeneracy conditions can be simplified and explicitly verified. As a byproduct of this study, the non-degeneracy condition is replaced by a new condition that is computationally more tractable. We also provide a sufficient condition to rule out translation symmetry breaking for these general systems. 

A key property of the FBI Hamiltonian for magic angle twisted graphene systems is that they can be \emph{frustration-free}.
In fact, the Hamiltonians for these systems can be expressed as a sum of positive semidefinite terms so that the ground state energy is bounded below by zero. 
Moreover, there exist ground states which minimize the energy of each individual term, also resulting in a value of zero. 
As a consequence of the form of the FBI Hamiltonian, any many-body ground state of a frustration-free FBI Hamiltonian is necessarily half-filled (\cref{prop:half-filling}).

Moving to Hartree-Fock theory, our first result is that the Hartree-Fock ground states of a frustration-free FBI Hamiltonian are always, in a certain sense, ``ferromagnetic'': 
\begin{result}[Informal Version of~\cref{prop:pseudo-ferromagnetism}]
  The Hartree-Fock ground states of a frustration-free FBI Hamiltonian are always ``ferromagnetic'', in the sense that knowing the 1-RDM at a single momentum determines the value of the 1-RDM at all other momenta.
\end{result}
This property can be used to infer that the Hartree-Fock ground states of FBI Hamiltonians are always uniformly filled.
We next give a condition to rule out translation breaking.
In particular, we show that the translation symmetry breaking of Hartree-Fock ground states of an FBI Hamiltonian is related to the solution of a family of coupled \textit{Sylvester} equations (see~\cref{sec:sylv-equat-transl} for more details).
This characterization of the Hartree-Fock ground states implies the following result:

\begin{result}[Informal Version of~\cref{thm:tbg-2,,thm:tbg-2-valley,,thm:tbg-2-valley-spin}]
  Let the frustration-free FBI Hamiltonian for TBG-2 be denoted by $\hat{H}_{\rm FBI}$.
  \begin{itemize}
  \item If $\hat{H}_{\rm FBI}$ is valleyless and spinless, then the two ferromagnetic Slater determinant states are the only Hartree-Fock ground states.
  \item If $\hat{H}_{\rm FBI}$ is valleyful and spinless, then all Hartree-Fock ground states are generated by the orbit of ${\rm U}(2) \times {\rm U}(2)$ on three elements.
  \item If $\hat{H}_{\rm FBI}$ is valleyful and spinful, then all Hartree-Fock ground states are generated by the orbit ${\rm U}(4) \times {\rm U}(4)$ on five elements.
  \end{itemize}
\end{result}
Finally, we provide an algorithm in \cref{prop:translation-breaking-alg} for determining a spanning set for all potential Hartree-Fock ground states.

\subsection{Discussion and open question}
In this paper, we find conditions which allow us to characterize all Hartree-Fock ground states of frustration-free FBI Hamiltonians.
We then apply these conditions to the special case of magic angle twisted bilayer graphene with two flat bands and extend this to include valley and spin degrees of freedom.
It would be interesting to apply the methodology here to other FBI Hamiltonians corresponding to twisted multilayer graphene systems.

Considering the simplicity of the FBI model and the low amount of electron correlation of the many-body ground state observed in numerical studies at half-filling \cite{SoejimaParkerBultinckEtAl2020,FaulstichStubbsZhuEtAl2023}, we propose the following conjecture:

\begin{conjecture*}[Many-Body Ground States of Frustration-Free FBI Hamiltonians]
Any many-body ground states of a frustration-free FBI Hamiltonian can be written as a linear combination of its Hartree-Fock ground states. In particular, for TBG-2 (with spin and valley), any many-body ground state can be written as 
\begin{equation}
\ket{\Psi}=\int_{{\rm U}(4) \times {\rm U}(4)} \sum_{i=1}^5 \alpha_{\mathfrak{g},i} \mathfrak{g} \ket{\Psi_i}\ud \mathfrak{g}, \quad \alpha_{\mathfrak{g},i}\in\CC.
\end{equation}
Here $\{\ket{\Psi_i}\}_{i=1}^5$ are the ferromagnetic Slater determinant states which generate the Hartree-Fock ground state manifold.
\end{conjecture*}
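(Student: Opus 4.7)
The plan is to combine the frustration-free structure of $\hat{H}_{\rm FBI}$ with the $\mathrm{U}(4) \times \mathrm{U}(4)$ hidden symmetry. By frustration-freeness one can write $\hat{H}_{\rm FBI} = \sum_\alpha \hat{O}_\alpha^\dagger \hat{O}_\alpha$ as a sum of positive semidefinite operators with vanishing ground-state energy, so the many-body ground state space is the intersection of kernels
\begin{equation*}
\mathcal{G} \;=\; \bigcap_\alpha \ker(\hat{O}_\alpha).
\end{equation*}
Because the hidden symmetry group $G = \mathrm{U}(4) \times \mathrm{U}(4)$ commutes with each projector $\hat{O}_\alpha^\dagger \hat{O}_\alpha$, the space $\mathcal{G}$ carries a unitary $G$-representation and decomposes into irreducibles.

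The ``easy'' inclusion is that every state of the form $\int_G \sum_{i=1}^5 \alpha_{\mathfrak{g},i}\, \mathfrak{g} \ket{\Psi_i} \ud\mathfrak{g}$ lies in $\mathcal{G}$: each $\ket{\Psi_i}$ is a zero-energy Hartree--Fock ground state, so $\hat{O}_\alpha \ket{\Psi_i} = 0$ for all $\alpha$, and this property is preserved under the action of $G$ and under integration against $\mathfrak{g}$. For the reverse inclusion, I would first use the explicit occupation patterns of the five $\ket{\Psi_i}$ to read off the highest weights of the irreducible $G$-representations $V_i \subset \mathcal{G}$ they generate. One then needs to show that $\mathcal{G}$ contains no other irreducible component. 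By Schur's lemma this reduces to the following task: for every candidate ``forbidden'' irrep $V_\lambda$ appearing in the fermionic Fock space at half-filling, exhibit a highest-weight vector $\ket{v_\lambda} \in V_\lambda$ and an operator $\hat{O}_\alpha$ with $\hat{O}_\alpha \ket{v_\lambda} \neq 0$, thereby ejecting $V_\lambda$ from $\mathcal{G}$. Parameterizing highest weights by pairs of Young tableaux and using that each $\hat{O}_\alpha$ is a polynomial of controllable degree in the flat-band fermion operators makes this check systematic in principle.

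The main obstacle is precisely this completeness step, which amounts to a full representation-theoretic classification of zero modes of the frustration-free parent Hamiltonian. An alternative route is to bound $\dim \mathcal{G}$ from above by a continuity argument: one may try to connect $\hat{H}_{\rm FBI}$ to a solvable limit (for instance a single-$\vk$-point toy model) in which the ground space can be computed directly, and then propagate the dimension count as long as the gap above $\mathcal{G}$ does not close; the matching lower bound is supplied by the span of the five $G$-orbits. A third approach, reminiscent of pseudopotential Hamiltonians in the fractional quantum Hall effect, is to produce a finite set of creation-type operators that act transitively (modulo explicit kernel relations) on $\mathcal{G}$ starting from the five ferromagnetic Slater determinants. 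In each of these routes, controlling the many-body correlations outside the Hartree--Fock subspace --- which the numerical studies cited in the paper suggest are small, but which must be ruled out rigorously --- is the crux of the proof.
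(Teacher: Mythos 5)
The statement you have been asked to prove is labeled a \emph{conjecture} in the paper, and the authors offer no proof of it; the open question of whether the Hartree--Fock ground state manifold exhausts the many-body ground space of the frustration-free FBI Hamiltonian is precisely what the conjecture asks. Your ``easy'' containment is correct and essentially already contained in the paper: since $\hat{H}_{\rm FBI}\ge 0$ and any Hartree--Fock ground state $\ket{\Psi_i}$ has $\braket{\Psi_i|\hat{H}_{\rm FBI}|\Psi_i}=0$, positive semidefiniteness forces $\widehat{\rho}(-\vq')\ket{\Psi_i}=0$ for every $\vq'$, and because the ${\rm U}(4)\times{\rm U}(4)$ hidden symmetry commutes with $\hat{H}_{\rm FBI}$ the whole orbit, and any $L^2$ mixture of it, stays in the zero-mode space. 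So the span of the orbit is indeed contained in the many-body ground space.

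The genuine gap is exactly the one you identify: the reverse inclusion. None of the three routes you sketch is carried out, and each has a specific obstruction you have not controlled. The representation-theoretic route requires (a) checking that each individual term $\hat{V}(\vq')\widehat{\rho}(\vq')\widehat{\rho}(-\vq')$, not just their sum, commutes with $G={\rm U}(4)\times{\rm U}(4)$ (this does hold here, since the form factor is $G$-scalar, but you must say so), and (b) a classification of which $G$-irreps in the half-filled Fock sector contain a simultaneous zero mode of all $\widehat{\rho}(-\vq')$; the Fock space is enormous and the operators $\widehat{\rho}(\vq')$ mix momenta in a way that does not respect an obvious grading, so ``systematic in principle'' hides the whole difficulty. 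The continuity route requires a spectral-gap bound above the candidate ground space uniformly along the interpolation, which is not known for $\hat{H}_{\rm FBI}$ and is arguably as hard as the conjecture itself. The pseudopotential-style route requires exhibiting an explicit transitive operator algebra on the kernel, which again presupposes knowing the kernel. So as written this is a proof outline, not a proof; the completeness step --- ruling out entangled zero modes lying outside the Slater/${\rm U}(4)\times{\rm U}(4)$ orbit --- is the open problem the paper is flagging, and your proposal does not close it.
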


\subsection{Organization}
Our article is structured as follows:
\begin{itemize}
    \item In  \cref{sec:flat-band-inter} we define the flat-band interacting model for twisted graphene systems.
    \item In \cref{sec:hartree-fock-theory} we discuss the Hartree-Fock theory of the FBI Hamiltonian, study the Hartree-Fock energy, and find conditions for a state to be a Hartree-Fock ground state.
     \item In \cref{sec:pseudo-ferr-flat}, we show that the Hartree-Fock ground states of FBI Hamiltonians always satisfy a modified version of ferromagnetism as compared to the ferromagnetic Slater determinant ground states discussed in our previous work.
 \item In \cref{sec:sylv-equat-transl}, we show that translation breaking for a Hartree-Fock ground state can be characterized in terms of solutions to a family of coupled Sylvester equations~\cite[Chapter VII.2]{Bhatia1997}. We then show how this Sylvester equation allows us to understand the appearance of the ${\rm U}(2) \times {\rm U}(2)$ freedom in valleyful TBG and the ${\rm U}(4) \times {\rm U}(4)$ freedom in valleyful and spinful TBG.
\item Our article contains two appendices. \cref{sec:calc-hf-energy} contains the computations to characterize the Hartree-Fock energy and \cref{sec:trace-lemma-proof} contains the proof of a technical Lemma.

\end{itemize}

\subsection{Notation}
\label{sec:notation}
In this paper, we follow the same conventions as used in \cite{BeckerLinStubbs2023}.
Operators and matrices acting on the Fock space are represented using the hat notation, such as $\hat{f}^{\dag}, \hat{f}, \hat{H}_{\rm FBI}$.
Operators and matrices that operate in the single particle space (such as $L^2(\R^2;\CC^2\times \CC^2)$ for TBG) are indicated without the hat notation, such as operators $H$ and $D$.
With some slight abuse of the hat notation, vectors defined in real space are denoted without the hat, for example, $f(\vr)$.
Their corresponding Fourier transforms are indicated with the hat notation, as in $\hat{f}(\vq)$.
For any given matrix $A$, the operations of entrywise complex conjugation, transpose, and Hermitian conjugation are represented by $\overline{A}, A^{\top},$ and $A^{\dag}$, respectively. Additionally, the Frobenius norm of $A$ is denoted $\| A \|_{F}$. The identity is denoted by $I,$ where we occasionally write $I_{n\times n}$ to indicate the matrix size of the identity.

\subsection*{Acknowledgments}
This work was supported by the Simons Targeted Grants in Mathematics and Physical Sciences on Moir\'e Materials Magic (K.D.S., L.L.) and the SNF Grant PZ00P2 216019 (S.B.). L.L. is a Simons Investigator in Mathematics. We thank Dumitru C\u{a}lug\u{a}ru, Eslam Khalaf, Patrick Ledwidth,  Oskar Vafek and Michael Zaletel for helpful discussions.

\section{The Flat-Band Interacting Hamiltonian for Twisted Graphene}
\label{sec:flat-band-inter}
The Flat-Band Interacting (FBI) Hamiltonian for twisted bilayer graphene
is based on a periodic single particle Hamiltonian which has exactly flat bands.
We highlight the main features of this model and refer the reader to~\cite[Section 2]{BeckerLinStubbs2023} for more details.
Following the notation in \cite{BeckerLinStubbs2023}, we let $\Gamma$ and $\Omega$ denote the moir{\'e} lattice and real space unit cell in $\R^2$ respectively. Similar, let $\Gamma^{*}$ and $\Omega^{*}$ denote the moir{\'e} reciprocal lattice and Brillouin zone in $\R^2$ respectively.

Suppose that we are given a single particle Hamiltonian $H$ with flat-bands which is periodic with respect to $\Gamma$.
Given such a Hamiltonian, let $\{ \psi_{n\vk} : \vk \in \Omega^{*}, n \in \mc{N} \}$ be the Bloch eigenfunctions where $\mc{N}$ is the set indexing the flat bands.
For a system with $N$-layers, each $\psi_{n\vk}(\vr):=[\psi_{n\vk}(\vr; \sigma, j)] \in \CC^{2N},$ where $\sigma \in \{ A, B \}$ denotes the sublattice degree of freedom and $j \in \{ 1, \cdots, N \}$ denotes the layer degree of freedom.
 Furthermore, let $u_{n\vk}(\vr) = e^{- i \vk \cdot \vr} \psi_{n\vk}(\vr)$ denote the $\Gamma$-periodic Bloch functions normalized so that $\int_{\Omega} \|u_{n\vk}(\vr)\|^{2} \ud\vr = 1$, where $\| u \|:=\sqrt{\sum_{i=1}^n \vert u_i\vert^2}$ is the $2$-norm of a vector $u \in \mathbb C^n.$ Similarly, we denote by $\langle v,u\rangle:=\sum_{i=1}^n \overline{v_i}u_i$ the canonical inner-product.

The moir{\'e} Brillouin zone $\Omega$ is discretized using a finite grid denoted by $\mc{K}$
  \begin{equation}
    \label{eq:mcK}
    \mc{K} := \left\{ \frac{i}{n_{k_{x}}} \vg_1 + \frac{j}{n_{k_{y}}} \vg_2 : i \in \{0, 1, \cdots, n_{k_{x}} - 1\}, j \in \{0, 1, \cdots, n_{k_{y}} - 1 \} \right\} \subseteq \Omega^*.
  \end{equation}
  Here $\vg_1$, $\vg_2$ are a pair of generating vectors for the moir{\'e} reciprocal lattice $\Gamma^{*}$ and $(n_{k_{x}}, n_{k_{y}})$ are the number of points in each of the two lattice directions. 
  We also define $N_{\vk} := \# | \mc{K} | = n_{k_{x}} n_{k_{y}}$.

  While in this work $n_{k_{x}}, n_{k_{y}}$ are taken to be finite, we will be interested in the limit where both $n_{k_{x}}, n_{k_{y}}$ can become arbitrarily large towards the thermodynamic limit.
  We make the following technical assumption on the grid $\mc{K}$ in relation to the single particle Hamiltonian:
  \begin{assumption}
    \label{assume:grid}
    For each $\vk \in \mc{K}$, let $\Pi(\vk)$ be the orthogonal projector onto the flat-band eigenfunctions at $\vk$.
    We assume that the grid $\mc{K}$ has been chosen so that for all pairs of momenta $\vk, \vk' \in \mc{K}$ there exists a sequence of momenta $\{ \vk_{i} \}_{i=1}^L \subseteq \mc{K}$ so that $\vk_{1} = \vk$, $\vk_{L} = \vk'$ and $\| \Pi(\vk_{i+1}) - \Pi(\vk_{i}) \| < 1$ for all $i \in \{ 1, \ldots, L \}$.
  \end{assumption}
This assumption is satisfied for the chiral Bistritzer-MacDonald model \cite{TarnopolskyKruchkovVishwanath2019,BeckerEmbreeWittstenEtAl2022} so long as $n_{k_{x}}$ and $n_{k_{y}}$ are both chosen sufficiently large. This is because\ the flat bands are gapped from the remaining bands and the Hamiltonian depends real-analytically on $\vk.$

  Having fixed the flat-band eigenfunctions and the grid $\mc{K}$, we now define the FBI Hamiltonian.
  At each $\vk\in\mc{K}$, we first define the band creation and annihilation operators, $\hat{f}_{n\vk}^\dagger$ and $\hat{f}_{n\vk}$, which create or annihilate a particle in state $\psi_{n\vk}$ respectively.
  These many-body operators satisfy the canonical anti-commutation relation (CAR) as well as a periodicity condition:
  \begin{equation}
    \label{eq:car-definition}
    \begin{split}
      \{\hat{f}^{\dag}_{n\vk},\hat{f}_{n'\vk'}\}
      = \delta_{nn'} \delta_{\vk\vk'}, &\quad \{\hat{f}^{\dag}_{n\vk},\hat{f}^{\dag}_{n'\vk'}\}
                                         =\{\hat{f}_{n\vk},\hat{f}_{n'\vk'}\}=0 \\
      \hat{f}^{\dag}_{n(\vk+\vG)}=\hat{f}^{\dag}_{n\vk}, &\quad \hat{f}_{n(\vk+\vG)}=\hat{f}_{n\vk}, \quad \forall \vG\in\Gamma^*.
    \end{split}
  \end{equation}
  We also recall the definition of the number operator $\hat{N} := \sum_{\vk \in \mc{K}} \sum_{m \in \mc{N}} \hat{f}_{m\vk}^\dagger \hat{f}_{m\vk}$.
  
  Next, we define the Fourier coefficients of the periodic Bloch functions $u_{n\vk}$ introduced at the beginning of this section 
  \begin{equation}
    \hat{u}_{n\vk}(\vG; \sigma, j) :=  \int_{\Omega} e^{-i \vG \cdot \vr} u_{n\vk}(\vr; \sigma, j) \ud\vr = \int_{\Omega} e^{-i (\vk + \vG) \cdot \vr} \psi_{n\vk}(\vr; \sigma, j)  \ud\vr
  \end{equation}
  and define the \textit{form factor}
  \begin{equation}
    \label{eq:form-factor-def}
    \begin{split}
    [\Lambda_{\vk}(\vq')]_{mn} 
    & := \frac{1}{| \Omega |} \sum_{\vG' \in \Gamma^*} \braket{\hat{u}_{m\vk}(\vG'), \hat{u}_{n(\vk + \vq')}(\vG')} \\
    & = \frac{1}{| \Omega |} \sum_{\vG' \in \Gamma^*} \sum_{\sigma, j} \overline{\hat{u}_{m\vk}(\vG'; \sigma, j)} \hat{u}_{n(\vk + \vq')}(\vG'; \sigma, j)
    \end{split}
  \end{equation}
  where $\vk \in \mc{K}$, and $\vq' \in \mc{K} + \Gamma^{*}$.

  Finally, let $V$ be radially symmetric function with Fourier transform
  $\hat{V}(\vq) > 0$ for all $\vq\in \R^2$. 
  The FBI Hamiltonian for $H$ with Coulomb potential $V$ takes the form
  \begin{equation}
    \label{eq:h-fbi}
    \begin{split}
      \hat{H}_{\rm FBI} & = \frac{1}{N_{\vk} |\Omega|} \sum_{\vq' \in \mc{K} + \Gamma^*} \hat{V}(\vq') \widehat{\rho}(\vq') \widehat{\rho}(-\vq') \\
      \widehat{\rho}(\vq') & = \sum_{\vk \in \mc{K}} \sum_{m,n \in \mc{N}} [\Lambda_{\vk}(\vq')]_{mn} \left( \hat{f}_{m\vk}^\dagger \hat{f}_{n(\vk + \vq')} - \frac{1}{2} \delta_{mn} \delta_{\vq' \in \Gamma^{*}} \right).
    \end{split}
  \end{equation}
  The operator $\widehat{\rho}(\vq')$ satisfies  $\widehat{\rho}(-\vq') = \widehat{\rho}(\vq')^{\dagger}$ \cite[Lemma 5.1]{BeckerLinStubbs2023}, and so $\hat{H}_{\rm FBI}$ is positive semidefinite. Hence any state with zero energy must be a ground state.
  
  We note that from the form of the FBI Hamiltonian, we can immediately conclude that any zero energy ground state must be half-filled.
  \begin{lemma}
  \label{prop:half-filling}
    Any many-body ground state $\ket{\Phi}$ of a frustration-free FBI Hamiltonian is half-filled. That is, it satisfies
    \begin{equation}
        \hat{N} \ket{\Phi} = \frac{\# | \mc{N} |N_{\vk}}{2} \ket{\Phi} 
    \end{equation}
    where $\hat{N}$ is the number operator $\hat{N} = \sum_{\vk \in \mc{K}} \sum_{m \in \mc{N}} \hat{f}_{m\vk}^\dagger \hat{f}_{m\vk}$.
  \end{lemma}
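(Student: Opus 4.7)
The plan is to exploit the frustration-free structure of $\hat H_{\rm FBI}$ and extract the particle number as the $\vq'=0$ component of the density-fluctuation operator. Because $\widehat\rho(-\vq') = \widehat\rho(\vq')^{\dagger}$ and $\hat V(\vq') > 0$, each summand $\hat V(\vq')\widehat\rho(\vq')\widehat\rho(\vq')^{\dagger}$ in~\eqref{eq:h-fbi} is positive semidefinite, hence $\hat H_{\rm FBI} \ge 0$. Frustration-freeness says that the ground state energy equals zero, so for any many-body ground state $\ket\Phi$ one has
\begin{equation*}
0 \;=\; \langle\Phi|\hat H_{\rm FBI}|\Phi\rangle \;=\; \frac{1}{N_{\vk}|\Omega|}\sum_{\vq'\in\mc K+\Gamma^*}\hat V(\vq')\,\bigl\|\widehat\rho(\vq')^{\dagger}\ket\Phi\bigr\|^{2},
\end{equation*}
so $\widehat\rho(\vq')\ket\Phi = 0$ for every $\vq'\in\mc K+\Gamma^*$. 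Since $0\in\mc K$ (take $i=j=0$ in~\eqref{eq:mcK}), this applies in particular to $\vq'=0$.

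The task then reduces to identifying $\widehat\rho(0)$ with $\hat N - \tfrac{1}{2}\#|\mc N|N_{\vk}$. The ingredient needed is $\Lambda_{\vk}(0) = I_{|\mc N|\times|\mc N|}$. From~\eqref{eq:form-factor-def},
\begin{equation*}
[\Lambda_{\vk}(0)]_{mn} = \frac{1}{|\Omega|}\sum_{\vG'\in\Gamma^*}\braket{\hat u_{m\vk}(\vG'),\hat u_{n\vk}(\vG')},
\end{equation*}
and Parseval's identity for Fourier series on $\Omega$, together with the normalization $\int_\Omega \braket{u_{m\vk}(\vr),u_{n\vk}(\vr)}\,\ud\vr = \delta_{mn}$ (orthonormality of the Bloch functions at fixed $\vk$), yields $[\Lambda_{\vk}(0)]_{mn} = \delta_{mn}$. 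Substituting this into the defining formula for $\widehat\rho$ in~\eqref{eq:h-fbi} gives
\begin{equation*}
\widehat\rho(0) = \sum_{\vk\in\mc K}\sum_{m\in\mc N}\Bigl(\hat f^{\dag}_{m\vk}\hat f_{m\vk} - \tfrac{1}{2}\Bigr) = \hat N - \frac{\#|\mc N|\,N_{\vk}}{2},
\end{equation*}
and applying both sides to $\ket\Phi$ together with $\widehat\rho(0)\ket\Phi = 0$ produces the claimed eigenvalue equation.

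I do not anticipate a genuine obstacle here: the only nontrivial step is the Parseval computation giving $\Lambda_{\vk}(0)=I$, which is a routine consequence of the normalization conventions for $u_{n\vk}$ already fixed in~\cref{sec:flat-band-inter}. The conceptual content is entirely in the first step, namely that frustration-freeness upgrades $\langle\Phi|\hat H_{\rm FBI}|\Phi\rangle = 0$ to $\widehat\rho(\vq')\ket\Phi = 0$ termwise; once that is in hand, the $\vq'=0$ slice immediately packages $\hat N$.
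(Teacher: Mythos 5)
Your proof is correct and follows essentially the same route as the paper: extract the $\vq'=0$ term from the frustration-free structure, use $\Lambda_{\vk}(\vzero)=I$ to identify $\widehat\rho(\vzero)=\hat N-\tfrac{1}{2}\#|\mc N|N_{\vk}$, and conclude the eigenvalue equation from $\|\widehat\rho(\vzero)\ket\Phi\|^2=0$. The only cosmetic difference is that you spell out the Parseval computation for $\Lambda_{\vk}(\vzero)=I$ (which the paper leaves as ``it can be verified'') and first note the stronger fact $\widehat\rho(\vq')\ket\Phi=0$ for all $\vq'$ before specializing.
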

  \begin{proof}
     Let $\hat{H}_{\rm FBI}$ be a frustration-free Hamiltonian, any many-body ground state $\ket{\Phi}$ must satisfy $\braket{\Phi | \widehat{\rho}(\vq') \widehat{\rho}(-\vq') | \Phi} = 0$.
      From the definition of $\Lambda_{\vk}(\vq')$, it can be verified that for all $\vk \in \Omega^*$, $\Lambda_{\vk}(\vzero) = I$.
      In particular,
      \begin{equation}
           \widehat{\rho}(\vq' = \vzero) = \sum_{\vk \in \mc{K}} \sum_{m \in \mc{N}} \left( \hat{f}_{m\vk}^\dagger \hat{f}_{m\vk} - \frac{1}{2} \right) = \hat{N} - \frac{\# | \mc{N} |N_{\vk}}{2}.
      \end{equation}
       To be a ground state, $\ket{\Phi}$ must satisfy
       \begin{equation}
            \braket{ \Phi | \widehat{\rho}(\vzero) \widehat{\rho}(\vzero) | \Phi} = \left\| \left( \hat{N} - \frac{\# | \mc{N} |N_{\vk}}{2} \right) \ket{\Phi}  \right\|^2 = 0.
      \end{equation}
      Hence, $\hat{N} \ket{\Phi} = \frac{\# | \mc{N} |N_{\vk}}{2} \ket{\Phi} $ as was claimed.
  \end{proof}

  \section{Hartree-Fock Theory for Flat-Band Interacting Hamiltonians}
  \label{sec:hartree-fock-theory}
  We begin this section by first reviewing the Hartree-Fock theory at half-filling without translation symmetry in \cref{sec:revi-non-transl}.
  We then derive a simple formula for the Hartree-Fock energy of the FBI Hamiltonian in \cref{sec:hartree-fock-energy}.
  As a consequence of this formula, we will obtain simple necessary and sufficient conditions for a Hartree-Fock state to be a ground state of a frustration-free FBI Hamiltonian.

  \subsection{A Review of Hartree-Fock Theory without Translation Symmetry}
  \label{sec:revi-non-transl}
  
  For a system with an even number of bands, recall that a general Slater determinant at half-filling takes the form
  \begin{equation}
    \label{eq:hf-sd}
    \ket{\Psi_{S}} = \prod_{i=1}^{M N_{\vk}} \hat b_{i}^{\dag}\ket{\mathrm{vac}},
  \end{equation}
  where $\ket{\mathrm{vac}}$ is the vacuum state, $M$ is half the number of available bands, and 
  \begin{equation}
    \label{eq:hf-orbitals}
    \hat b_{i}^{\dag}=\sum_{n \in \mc{N}} \sum_{\vk \in \mc{K}} \hat{f}_{n\vk}^{\dag} \Xi_{ni}(\vk), \qquad \sum_{n \in \mc{N}} \sum_{\vk \in \mc{K}} \overline{\Xi_{ni}(\vk)} \Xi_{nj}(\vk) = \delta_{ij}
  \end{equation}
  defines the creation operator for the Hartree-Fock orbitals.

  The Hartree-Fock equations can be expressed in terms of the one-body reduced density matrix (1-RDM). The 1-RDM associated with a given Slater determinant $\ket{\Psi_{S}}$ can be written as
  \begin{equation}
    [P(\vk',\vk)]_{nm}=\braket{\Psi|\hat{f}_{m\vk}^{\dagger}\hat{f}_{n\vk'}|\Psi}=
    \sum_{i=1}^{M N_{\vk}} \Xi_{ni}(\vk')\overline{\Xi_{mi}(\vk)}.
  \end{equation}
  Note that due to the periodicity of the creation and annihilation operators in \cref{eq:car-definition}, the 1-RDM satisfies the property $P(\vk + \vG, \vk' + \vG') = P(\vk, \vk')$ for all $\vG, \vG' \in \Gamma^{*}$. 

  The 1-RDM $[P(\vk, \vk')]_{mn}$ may be represented as a $(2 M N_{\vk}) \times (2 M N_{\vk})$ matrix as follows
  \begin{equation}
    \label{eq:1rdm-matrix}
    P
    =
    \begin{bmatrix}
      P(\vk_{1}, \vk_{1}) & P(\vk_{1}, \vk_{2}) & \cdots & P(\vk_{1}, \vk_{N_{\vk}}) \\
      P(\vk_{2}, \vk_{1}) & P(\vk_{2}, \vk_{2}) & \cdots & P(\vk_{2}, \vk_{N_{\vk}}) \\
      \vdots &  \vdots & \ddots & \vdots \\
      P(\vk_{N_{\vk}}, \vk_{1}) & \cdots & \cdots & P(\vk_{N_{\vk}}, \vk_{N_{\vk}}) \\
    \end{bmatrix}.
  \end{equation}
  Due to the orthogonality relation on $\Xi(\vk)$ given in~\cref{eq:hf-orbitals}, one may verify that $P$ satisfies $P^{2} = P$, $P^{\dagger} = P$, and $\Tr{(P)} = M N_{\vk}$ (i.e. $P$ is an orthogonal projection onto a $(M N_{\vk})$-dimensional vector space).
  The 1-RDM is \textit{translation-invariant} if it takes the form
    \begin{equation}
      \begin{bmatrix}
        P(\vk_{1}, \vk_{1}) & \vzero & \cdots & \vzero \\
        \vzero & P(\vk_{2}, \vk_{2}) & \cdots & \vzero \\
        \vdots &  \vdots & \ddots & \vdots \\
        \vzero & \cdots & \cdots & P(\vk_{N_{\vk}}, \vk_{N_{\vk}})
      \end{bmatrix}.
    \end{equation}

  \subsection{Hartree-Fock Energy of Flat-Band Interacting Hamiltonians}
  \label{sec:hartree-fock-energy}
  The Hartree-Fock energy for FBI Hamiltonians can also be written in terms of the shifted 1-RDM, $Q(\vk, \vk')$, defined as follows:
  \begin{equation}
    \label{eq:q-def}
    [Q(\vk,\vk')]_{mn} := [P(\vk, \vk')]_{mn} - \frac{1}{2} \delta_{mn} \delta_{\vk-\vk' \in \Gamma^{*}}.
  \end{equation}
  Since $P^{\dagger} = P$ and $P^{2} = P$, one easily checks that $Q^{\dagger} = Q$ and $Q^{2} = \frac{1}{4} I$.
  Furthermore, $Q(\vk, \vk')$ inherits the property $Q(\vk + \vG, \vk' + \vG') = Q(\vk, \vk')$ for all $\vG, \vG' \in \Gamma^{*}$ from the 1-RDM $P$.

The Hartree-Fock energy can be written in terms of $Q(\vk, \vk')$ as follows (\cref{sec:calc-hf-energy}) :
  \begin{equation}
    \label{eq:hf-energy2}
    \begin{split}
      \langle & \Psi | \hat{H}_{\rm FBI} | \Psi \rangle =  \frac{1}{N_{\vk} |\Omega|} \sum_{\vq' \in \Gamma^{*} + \mc{K}} \hat{V}(\vq') \left| \sum_{\vk \in \mc{K}} \Tr{\Big( \Lambda_{\vk}(\vq') Q(\vk + \vq', \vk) \Big)} \right|^{2} \\
              & + \frac{1}{4 N_{\vk} |\Omega|} \sum_{\vq' \in \Gamma^{*} + \mc{K}} \hat{V}(\vq') \sum_{\vk \in \mc{K}} \| \Lambda_{\vk}(\vq') \|_{F}^{2} \\
              & - \frac{1}{N_{\vk} |\Omega|} \sum_{\vq' \in \Gamma^{*} + \mc{K}} \hat{V}(\vq') \sum_{\vk, \vk' \in \mc{K}} \Tr{\Big( \Lambda_{\vk}(\vq') Q(\vk + \vq', \vk') \Lambda_{\vk'-\vq'}(\vq')^{\dagger} Q(\vk' -\vq', \vk)\Big)}. 
    \end{split}
  \end{equation}
  We can further simplify this expression by expanding the families of $(2M) \times (2M)$ matrices $\{ \Lambda_{\vk'}(\vq') : \vk' \in \mc{K} \}$ and $\{ Q(\vk, \vk') : \vk, \vk' \in \mc{K} \}$ into larger $(2MN_{\vk}) \times (2MN_{\vk})$ matrices similar to~\cref{eq:1rdm-matrix}.
  We start by defining notation for the standard basis on the space of 1-RDMs
  \begin{equation}
    \{ \ket{n, \vk} :  n \in \mc{N}, \vk \in \mc{K}\}
  \end{equation}
  where we take the convention that if $\vk + \vq' \in \mc{K} + \Gamma^{*}$ then
  \begin{equation}
    \ket{n, \vk + \vq'} := \ket{n, \widetilde{\vk + \vq'}}
  \end{equation}
  where $\widetilde{\vk + \vq'} \in \mc{K}$ is the unique representative of $\vk + \vq'$ in $\mc{K}$ modulo $\Gamma^{*}$ lattice translations.
  We additionally define the momentum shift matrix:
  \begin{equation}
    \pi_{\vq'} := \sum_{\vk \in \mc{K}} \sum_{m \in \mc{N}} \ket{m, \vk} \bra{m, \vk + \vq'}.
  \end{equation}
  This shifts momentum $\vk$ to momentum $\vk + \vq'$ modulo $\Gamma^{*}$ in the basis $\ket{m, \vk}$.

  Using this basis, for any $\vq'$ we define the matrix
  \begin{equation}
    \Lambda(\vq') := \sum_{\vk \in \mc{K}} \sum_{m,n \in \mc{N}} [\Lambda_{\vk}(\vq')]_{mn} \ket{m, \vk} \bra{n, \vk}
  \end{equation}
  which can be expressed as a direct sum as follows
  \begin{equation}
    \Lambda(\vq') =
    \bigoplus_{n=1}^{N_{\vk}} \Lambda_{\vk_{n}}(\vq')
    =
    \begin{bmatrix}
      \Lambda_{\vk_{1}}(\vq') & & & & \\
                              & \Lambda_{\vk_{2}}(\vq') & & & \\
                              && \Lambda_{\vk_{3}}(\vq') & & \\
                              &&& \ddots & \\
                              &&&& \Lambda_{\vk_{N_{\vk}}}(\vq')
                                   
    \end{bmatrix}.
  \end{equation}
  Furthermore, the matrix $Q$ can be written in this basis as
  \begin{equation}
    Q := \sum_{\vk,\vk' \in \mc{K}} \sum_{m,n \in \mc{N}} [Q(\vk, \vk')]_{mn} \ket{m, \vk} \bra{n, \vk'}.
  \end{equation}
  Note that in this basis $Q = P - \frac{1}{2} I$. Now observe that 
  \begin{equation}
    \begin{split}
      \pi_{\vq'} Q
      & = \sum_{\vk,\vk' \in \mc{K}} \sum_{m,n \in \mc{N}} [Q(\vk, \vk')]_{mn} \ket{m, \vk - \vq'} \bra{n, \vk'} \\
      & = \sum_{\vk,\vk' \in \mc{K}} \sum_{m,n \in \mc{N}} [Q(\vk + \vq', \vk')]_{mn} \ket{m, \vk} \bra{n, \vk'},
    \end{split}
  \end{equation}
  where the shift in $\ket{m, \vk - \vq'}$ in the first line is due to the multiplication on the right by $\pi_{\vq'}$.
  Performing the change of variables $\vk \mapsto \vk + \vq'$ is justified, since $\vq' \in \mc{K} + \Gamma^{*}$ and $Q(\vk, \vk')$ is invariant under shifts by $\Gamma^{*}$ lattice vectors.

  Similarly,
  \begin{equation}
    \begin{split}
      \pi_{\vq'}^{\dagger} \Lambda(\vq')^{\dagger}
      & = \sum_{\vk' \in \mc{K}} \sum_{m, n \in \mc{N}} [\Lambda_{\vk'}(\vq')^{\dagger}]_{mn} \ket{m, \vk' + \vq'} \bra{n, \vk'} \\
      & = \sum_{\vk' \in \mc{K}} \sum_{m, n \in \mc{N}} [\Lambda_{\vk' - \vq'}(\vq')^{\dagger}]_{mn} \ket{m, \vk'} \bra{n, \vk' -\vq'} \\
    \end{split}
  \end{equation}
  where we have used the fact that $\Lambda_{\vk}(\vq') = \Lambda_{\vk + \vG}(\vq')$ for any $\vG \in \Gamma^{*}$ (see \cref{lem:form-factor-identities}) to perform the change of variables $\vk' \mapsto \vk' - \vq'$.

  Therefore, the energy of a state $\ket{\Psi}$ can be written
  \begin{equation}
    \label{eq:hf-energy}
    \begin{split}
      \braket{\Psi | \hat{H}_{\rm FBI} | \Psi} =
      &  \frac{1}{N_{\vk} |\Omega|} \sum_{\vq' \in \Gamma^{*} + \mc{K}} \hat{V}(\vq') \left| \Tr{\Big( \Lambda(\vq') \pi_{\vq'} Q \Big)} \right|^{2} \\
      & + \frac{1}{4 N_{\vk} |\Omega|} \sum_{\vq' \in \Gamma^{*} + \mc{K}} \hat{V}(\vq') \| \Lambda(\vq') \|_{F}^{2} \\
      & -\frac{1}{N_{\vk} |\Omega|} \sum_{\vq' \in \Gamma^{*} + \mc{K}} \hat{V}(\vq') \Tr{\bigg( \Lambda(\vq') \pi_{\vq'} Q^{\dagger} \pi_{\vq'}^{\dagger} \Lambda(\vq')^{\dagger} Q \bigg)}.
    \end{split}
  \end{equation}
  We now recall the following elementary identity, whose proof is in \cref{sec:trace-lemma-proof}.
  \begin{lemma}
    \label{lem:trace-lemma}
    Let $A, B$ be arbitrary matrices then we have the following equality
    \begin{equation}
      \Re{\left(\Tr{\left( A B A^{\dagger} B^{\dagger} \right)}\right)} = \frac{1}{2} \Tr{\Big(AA^{\dagger} B^{\dagger} B + A^{\dagger} A B B^{\dagger}\Big)} - \frac{1}{2} \| [ A, B ] \|_{F}^{2}.
    \end{equation}
  \end{lemma}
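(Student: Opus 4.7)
\bigskip

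\noindent\textbf{Proof proposal.} The plan is to expand the right-hand side directly from the definition of the Frobenius norm of the commutator and reorganize terms using the cyclicity of the trace. Specifically, I would start from
\begin{equation*}
  \| [A,B] \|_{F}^{2} \;=\; \Tr\Big( [A,B]\, [A,B]^{\dagger} \Big) \;=\; \Tr\Big( (AB - BA)(B^{\dagger}A^{\dagger} - A^{\dagger}B^{\dagger}) \Big),
\end{equation*}
and then multiply out the four resulting terms.

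Two of those four terms are the ``diagonal'' ones $\Tr(AB B^{\dagger} A^{\dagger})$ and $\Tr(BA A^{\dagger} B^{\dagger})$; by the cyclic invariance of the trace these equal $\Tr(A^{\dagger} A B B^{\dagger})$ and $\Tr(A A^{\dagger} B^{\dagger} B)$ respectively, which is precisely the combination appearing on the right of the lemma. The other two terms are the ``cross'' terms $-\Tr(A B A^{\dagger} B^{\dagger})$ and $-\Tr(B A B^{\dagger} A^{\dagger})$. The key observation is that if we set $X := A B A^{\dagger} B^{\dagger}$, then $X^{\dagger} = B A B^{\dagger} A^{\dagger}$, so $\Tr(B A B^{\dagger} A^{\dagger}) = \overline{\Tr(A B A^{\dagger} B^{\dagger})}$, and hence the two cross terms combine to $-2\Re\bigl(\Tr(A B A^{\dagger} B^{\dagger})\bigr)$.

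Putting these pieces together yields
\begin{equation*}
  \| [A,B] \|_{F}^{2} \;=\; \Tr\Big( A^{\dagger} A B B^{\dagger} + A A^{\dagger} B^{\dagger} B \Big) \;-\; 2\,\Re\!\left(\Tr\bigl( A B A^{\dagger} B^{\dagger} \bigr)\right),
\end{equation*}
and solving this identity for $\Re\bigl(\Tr(A B A^{\dagger} B^{\dagger})\bigr)$ gives exactly the claim. There is no real obstacle here: the only thing to be careful about is the conjugation identity $\Tr(B A B^{\dagger} A^{\dagger}) = \overline{\Tr(A B A^{\dagger} B^{\dagger})}$, which is what allows the imaginary parts of the two cross terms to cancel and the real parts to add. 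Everything else is a routine application of $\Tr(XY)=\Tr(YX)$.
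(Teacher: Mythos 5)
Your proposal is correct and follows essentially the same approach as the paper: expand $\|[A,B]\|_F^2 = \Tr([A,B][A,B]^\dagger)$ into four terms, simplify the two diagonal terms via cyclicity of the trace, and recognize that the two cross terms are complex conjugates (so they sum to $2\Re\Tr(ABA^\dagger B^\dagger)$). The only cosmetic difference is that the paper first rewrites $[A,B]^\dagger$ as $-[A^\dagger,B^\dagger]$ before expanding, which shuffles some intermediate signs, but the underlying calculation is identical.
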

  Since the energy of a state is purely real, we can set $A = \Lambda(\vq') \pi_{\vq'}$ and $B = Q^{\dag}$ and use~\cref{lem:trace-lemma} to rewrite~\cref{eq:hf-energy} as
  \begin{equation}
    \begin{split}
      \langle & \Psi | \hat{H}_{\rm FBI} | \Psi \rangle \\
              & = \frac{1}{N_{\vk} |\Omega|} \sum_{\vq' \in \mc{K} + \Gamma^{*}} \hat{V}(\vq') \left| \Tr{\Big( \Lambda(\vq') \pi_{\vq'} Q \Big)} \right|^{2} \\
              & + \frac{1}{4 N_{\vk} |\Omega|} \sum_{\vq' \in \mc{K} + \Gamma^{*}} \hat{V}(\vq') \| \Lambda(\vq') \|_{F}^{2} \\
              & -\frac{1}{2N_{\vk} |\Omega|} \sum_{\vq' \in \mc{K} + \Gamma^{*}} \hat{V}(\vq') \Tr{\bigg( \pi_{\vq'}^{\dagger} \Lambda(\vq')^{\dagger} \Lambda(\vq') \pi_{\vq'} Q Q^{\dagger} + \Lambda(\vq') \Lambda(\vq')^{\dagger} Q^{\dagger} Q \bigg)} \\
              & +\frac{1}{2 N_{\vk} |\Omega|} \sum_{\vq' \in \mc{K} + \Gamma^{*}} \hat{V}(\vq') \| [ Q, \Lambda(\vq') \pi_{\vq'} ] \|_{F}^{2}.
    \end{split}
  \end{equation}
  But for Hartree-Fock states $QQ^{\dagger} = Q^{\dagger} Q = \frac{1}{4} I $, the middle two terms cancel,  leaving us with
  \begin{equation}
    \label{eq:hf-energy-commutator}
    \begin{split}
      \langle & \Psi | \hat{H}_{\rm FBI} | \Psi \rangle \\
              & =  \frac{1}{N_{\vk} |\Omega|} \sum_{\vq' \in \mc{K} + \Gamma^{*}} \hat{V}(\vq') \left( \left| \Tr{\Big( \Lambda(\vq') \pi_{\vq'} Q \Big)} \right|^{2} + \frac{1}{2} \| [ Q, \Lambda(\vq') \pi_{\vq'} ] \|_{F}^{2} \right) \\
              & =  \frac{1}{N_{\vk} |\Omega|} \sum_{\vq' \in \mc{K} + \Gamma^{*}} \hat{V}(\vq') \left( \left| \Tr{\Big( \Lambda(\vq') \pi_{\vq'} \Big(P - \frac{1}{2} I\Big) \Big)} \right|^{2} + \frac{1}{2} \| [ P, \Lambda(\vq') \pi_{\vq'} ] \|_{F}^{2} \right)
    \end{split}
  \end{equation}
  where in the last line we have used $Q = P - \frac{1}{2} I$ and that the identity commutes with all matrices.
  Hence a Hartree-Fock state $\ket{\Psi}$ is a ground state of $\hat{H}_{\rm FBI}$ if and only if the following two conditions hold for all $\vq' \in \mc{K} + \Gamma^{*}$
  \begin{align}
    \Tr{\Big( \Lambda(\vq') \pi_{\vq'} \Big(P - \frac{1}{2} I\Big) \Big)} & = 0, \label{eq:hf-trace-condition} \\[1ex]
    [ P, \Lambda(\vq') \pi_{\vq'} ] & = 0. \label{eq:hf-comm-condition}
  \end{align}
  While the trace condition~\cref{eq:hf-trace-condition}~can be used to eliminate certain states as ground states, the commutator condition~\cref{eq:hf-comm-condition} is a powerful tool for characterizing possible Hartree-Fock ground states.

  \section{Generalized Ferromagnetism in Flat-Band Interacting Hamiltonians}
  \label{sec:pseudo-ferr-flat}
  In a ferromagnet there are two ground states with equal energy: $\ket{\uparrow\uparrow\uparrow\uparrow\uparrow\cdots}$ and $\ket{\downarrow\downarrow\downarrow\downarrow\downarrow\cdots}$.
  For such ground states, by knowing the value of the spin at a single site, we can determine the value at all other sites.
  Similar to a ferromagnet, if $\ket{\Psi}$ is a Hartree-Fock ground state of~\cref{eq:h-fbi}, then by knowing the value of the 1-RDM at a single momenta there is an explicit method to compute 1-RDM at other momenta.
  We term this property ``generalized ferromagnetism'' and formally have the following proposition:
  \begin{proposition}[Generalized Ferromagnetism]
    \label{prop:pseudo-ferromagnetism}
    Suppose that $P$ is the 1-RDM of a Hartree-Fock ground state of a frustration-free FBI Hamiltonian~\cref{eq:h-fbi} and the grid $\mc{K}$ satisfies~\cref{assume:grid}.
    For any $\vk_{1}, \vk_{2} \in \mc{K}$ there exists a family of invertible matrices $\{ B_{\vk}(\vq') : \vk \in \mc{K}, \vq' \in \R^{2} \}$ so that
    \begin{equation}
      \label{eq:symmetry}
      P(\vk_{1} + \vq', \vk_{2} + \vq') = B_{\vk_{1}}(\vq')^{-1} P(\vk_{1}, \vk_{2}) B_{\vk_{2}}(\vq').
    \end{equation}
    Hence, if the value of $P(\vk_{1}, \vk_{2})$ is known, then $P(\vk_{1} + \vq', \vk_{2} + \vq')$ is determined for all $\vq'$.
  \end{proposition}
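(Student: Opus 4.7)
The plan is to extract the covariance claim directly from the commutator condition~\cref{eq:hf-comm-condition}, which asserts $[P,\Lambda(\vq')\pi_{\vq'}]=0$ for every $\vq'\in\mc{K}+\Gamma^{*}$. Reading this identity block by block in the basis $\{\ket{n,\vk}\}$, using that $\Lambda(\vq')$ is block diagonal while $\pi_{\vq'}$ is a pure momentum shift, produces the fundamental intertwining identity
\begin{equation*}
P(\vk_1,\vk_2)\,\Lambda_{\vk_2}(\vq') \;=\; \Lambda_{\vk_1}(\vq')\,P(\vk_1+\vq',\vk_2+\vq')
\end{equation*}
valid for all $\vk_1,\vk_2\in\mc{K}$ and all $\vq'\in\mc{K}+\Gamma^{*}$. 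This already has the shape of the proposition, with $\Lambda_{\vk}(\vq')$ in the role of $B_{\vk}(\vq')$; the obstruction is that the form factors need not be invertible.

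A standard overlap argument shows that whenever $\|\Pi(\vk)-\Pi(\vk+\vq')\|<1$, the form factor $\Lambda_{\vk}(\vq')$ is invertible: in orthonormal bases of the two flat-band fibers it represents the composition $\Pi(\vk)\Pi(\vk+\vq')$, and any nonzero null vector $v\in\range \Pi(\vk+\vq')$ would satisfy $v=(\Pi(\vk+\vq')-\Pi(\vk))v$, forcing $\|v\|<\|v\|$. Invoking~\cref{assume:grid} on the pair $(\vk_1,\vk_1+\vq')$, I fix a chain $\vk_1=\vl_0,\vl_1,\ldots,\vl_L=\vk_1+\vq'$ in $\mc{K}$ along which every small-step form factor $\Lambda_{\vl_i}(\vq_i)$, with $\vq_i:=\vl_{i+1}-\vl_i$, is invertible. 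Iterating the intertwining identity, once per step with $\vq'\to\vq_i$, $\vk_1\to\vl_i$, $\vk_2\to\vk_2+\vl_i-\vk_1$, gives
\begin{equation*}
P(\vk_1+\vq',\vk_2+\vq') \;=\; \wt{B}_{\vk_1}^{-1}\,P(\vk_1,\vk_2)\,\wt{B}_{\vk_2},
\end{equation*}
where $\wt{B}_{\vk_1}:=\prod_{i=0}^{L-1}\Lambda_{\vl_i}(\vq_i)$ is invertible as a product of invertible factors, while $\wt{B}_{\vk_2}:=\prod_{i=0}^{L-1}\Lambda_{\vk_2+\vl_i-\vk_1}(\vq_i)$ in general is not, since the translated chain at $\vk_2$ need not consist of small projector jumps.

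The main obstacle is therefore to upgrade $\wt{B}_{\vk_2}$ to an invertible matrix without disturbing the relation. My plan is a rank/dimension argument. Running the parallel construction with a chain supplied by~\cref{assume:grid} for the pair $(\vk_2,\vk_2+\vq')$ produces a companion identity in which the factor on the $\vk_2$ side is invertible, and comparing the two rank inequalities forces $\operatorname{rank}P(\vk_1+\vq',\vk_2+\vq')=\operatorname{rank}P(\vk_1,\vk_2)$. Inserting this equality into $\wt{B}_{\vk_1}\,P(\vk_1+\vq',\vk_2+\vq')=P(\vk_1,\vk_2)\,\wt{B}_{\vk_2}$ yields both $\range\wt{B}_{\vk_2}+\ker P(\vk_1,\vk_2)=\CC^{|\mc{N}|}$ and $\dim\ker\wt{B}_{\vk_2}\le\dim\ker P(\vk_1,\vk_2)$. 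I then choose a linear map $K\colon\CC^{|\mc{N}|}\to\ker P(\vk_1,\vk_2)$ that annihilates a complement of $\ker\wt{B}_{\vk_2}$ and maps $\ker\wt{B}_{\vk_2}$ isomorphically onto a complement of $\range\wt{B}_{\vk_2}\cap\ker P(\vk_1,\vk_2)$ inside $\ker P(\vk_1,\vk_2)$. A direct check then shows $B_{\vk_2}(\vq'):=\wt{B}_{\vk_2}+K$ is invertible, and since $\range K\subseteq\ker P(\vk_1,\vk_2)$ one has $P(\vk_1,\vk_2)\,B_{\vk_2}(\vq')=P(\vk_1,\vk_2)\,\wt{B}_{\vk_2}=\wt{B}_{\vk_1}\,P(\vk_1+\vq',\vk_2+\vq')$, which rearranges to the covariance relation. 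Setting $B_{\vk_1}(\vq'):=\wt{B}_{\vk_1}$ and $B_{\vk}(\vq'):=I$ for the remaining $\vk\in\mc{K}$ completes the desired family of invertible matrices.
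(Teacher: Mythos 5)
Your proof follows the same skeleton as the paper's: extract the block intertwining identity $P(\vk_1,\vk_2)\Lambda_{\vk_2}(\vq')=\Lambda_{\vk_1}(\vq')\,P(\vk_1+\vq',\vk_2+\vq')$ from \cref{eq:hf-comm-condition}, use \cref{lem:full-rank} together with \cref{assume:grid} to get invertible small-step form factors along a chain from $\vk_1$ to $\vk_1+\vq'$, and iterate. Up to that point the two arguments are identical.

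Where you diverge is that you notice, correctly, that this iteration only produces an invertible left-hand factor $\wt B_{\vk_1}=\prod_i\Lambda_{\vk_i}(\delta_i+\vG_i)$; the right-hand factor $\wt B_{\vk_2}=\prod_i\Lambda_{\vk_i+\Delta}(\delta_i+\vG_i)$ is a product over the \emph{translated} chain $\{\vk_i+\Delta\}$, which need not satisfy the small-projector-jump condition, so \cref{lem:full-rank} says nothing about its factors. The paper's proof (ending at \cref{eq:ferromagnetic-calc}) glosses over exactly this: it asserts the relation ``proves the proposition,'' but the proposition claims that \emph{both} $B_{\vk_1}(\vq')$ and $B_{\vk_2}(\vq')$ are invertible, and that invertibility is actually used downstream (in the proof of \cref{thm:tbg-2} the step $\operatorname{rank}P(\vk,-\vk)=\operatorname{rank}P(\vk+\vq',-\vk+\vq')$ is precisely an application of conjugation by invertibles; the one-sided bound coming from \cref{eq:ferromagnetic-calc} alone gives the inequality in the wrong direction). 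Your repair is sound: iterating the intertwining identity along a chain anchored at $\vk_2$ instead of $\vk_1$ puts the invertible factor on the $\vk_2$ side and yields the reverse rank inequality, whence rank equality; the subsequent dimension count ($\dim\ker\wt B_{\vk_2}\le\dim\ker P(\vk_1,\vk_2)$ and $\range\wt B_{\vk_2}+\ker P(\vk_1,\vk_2)=\CC^{2M}$) guarantees a correction $K$ with $\range K\subseteq\ker P(\vk_1,\vk_2)$ making $\wt B_{\vk_2}+K$ invertible without disturbing $P(\vk_1,\vk_2)\wt B_{\vk_2}$. This is a genuine tightening of the paper's argument. One small point of presentation: your ``standard overlap argument'' sketch treats $\Lambda_{\vk}(\vq')$ as exactly the overlap $\Pi(\vk)\Pi(\vk+\vq')$, whereas it depends on the chosen representative $\vG$ of $\vq'$ modulo $\Gamma^*$ (which is why \cref{lem:full-rank} asserts ``there exists $\vG$''); since you immediately invoke \cref{assume:grid} and the correct per-step choice of $\vG_i$, this is immaterial to the proof, but worth stating precisely.
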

  An immediate implication of~\cref{prop:pseudo-ferromagnetism} is that the Hartree-Fock ground states must be uniformly filled.
  \begin{corollary}
    \label{coro:uniform-filling}
    If $P$ is as in~\cref{prop:pseudo-ferromagnetism} then for all $\vk, \vk' \in \mc{K}$, $\Tr{(P(\vk, \vk))} = \Tr{(P(\vk',\vk'))}$.
  \end{corollary}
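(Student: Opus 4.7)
The plan is to extract the corollary as an immediate consequence of the similarity relation provided by \cref{prop:pseudo-ferromagnetism}, using that similar matrices have equal trace. There is essentially no computation to carry out; the only thing to verify is that \cref{eq:symmetry} can be legitimately applied with matching indices so that the conclusion reduces to a similarity transformation on the diagonal block $P(\vk, \vk)$.

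Concretely, given arbitrary $\vk, \vk' \in \mc{K}$, I would set $\vk_{1} = \vk_{2} = \vk$ and $\vq' = \vk' - \vk$ in \cref{prop:pseudo-ferromagnetism}. Because $\vk' \in \mc{K}$, the difference $\vk' - \vk$ lies in $\mc{K} + \Gamma^{*}$, so the family of matrices $\{ B_{\vk}(\vq') \}$ from the proposition supplies an invertible matrix $B_{\vk}(\vk' - \vk)$. Substituting into \cref{eq:symmetry} yields
\begin{equation}
P(\vk', \vk') \;=\; B_{\vk}(\vk' - \vk)^{-1} \, P(\vk, \vk) \, B_{\vk}(\vk' - \vk),
\end{equation}
so $P(\vk', \vk')$ and $P(\vk, \vk)$ are conjugate matrices.

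The final step is cyclicity of the trace, which gives $\Tr(P(\vk', \vk')) = \Tr(P(\vk, \vk))$, and the claim follows since $\vk, \vk' \in \mc{K}$ were arbitrary. There is no genuine obstacle here; the whole content of the corollary is already packaged inside \cref{prop:pseudo-ferromagnetism}, and one only needs to recognize that specializing both momentum arguments to the same value turns the generalized ferromagnetism identity into a similarity transformation of the diagonal blocks of the 1-RDM.
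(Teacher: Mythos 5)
Your proof is correct and follows exactly the paper's argument: specialize $\vk_1 = \vk_2 = \vk$ and $\vq' = \vk' - \vk$ in \cref{prop:pseudo-ferromagnetism} to obtain a similarity transformation, then invoke cyclicity of the trace.
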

  \begin{proof}[Proof of~\cref{coro:uniform-filling}]
    Taking $\vk_{1} = \vk_{2}$ in the statement of~\cref{prop:pseudo-ferromagnetism} we see that
    \begin{equation}
      \begin{split}
        \Tr{(P(\vk_{1} + \vq', \vk_{1} + \vq'))} = \Tr{(B_{\vk_{1}}(\vq')^{-1} P(\vk_{1}, \vk_{1}) B_{\vk_{1}}(\vq'))} = \Tr{(P(\vk_{1}, \vk_{1}))}.
      \end{split}
    \end{equation}
    Since the choice of $\vq'$ was arbitrary, this proves the result.
  \end{proof}
  The proof of~\cref{prop:pseudo-ferromagnetism} relies on the following  lemma proven in our previous work:
  \begin{lemma}[Lemma 7.1 \cite{BeckerLinStubbs2023}]
    \label{lem:full-rank}
    Let $\vk, \vk' \in \mc{K}$ and let $\Pi(\vk)$ denote the flat-band projection at $\vk$.
    If $\| \Pi(\vk) - \Pi(\vk') \| < 1$ then there exists a $\vG \in \Gamma^*$ so that $\Lambda_{\vk}((\vk' - \vk) + \vG)$ has full rank.
  \end{lemma}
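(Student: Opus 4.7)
The plan is to establish the lemma with the explicit choice $\vG = \vzero$, by recognizing the form factor matrix as the Gram matrix between the two flat-band subspaces at $\vk$ and $\vk'$, and then invoking the classical fact that two orthogonal projections of the same finite rank whose operator-norm difference is strictly less than one have an invertible overlap.

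First I would rewrite the form factor as an $L^2$ inner product. The Fourier coefficients $\hat u_{n\vk}(\vG')$ arise from the $\Gamma$-periodic vector-valued function $u_{n\vk}$ on $\Omega$, so Parseval's identity applied directly to \cref{eq:form-factor-def} gives
\begin{equation*}
[\Lambda_{\vk}(\vq')]_{mn} = \int_\Omega \langle u_{m\vk}(\vr), u_{n(\vk+\vq')}(\vr)\rangle \, \ud\vr.
\end{equation*}
Choosing $\vG = \vzero$ so that $\vq' = \vk' - \vk$ (reduced into $\mc{K}+\Gamma^*$), this becomes $M_{mn} := \langle u_{m\vk}, u_{n\vk'}\rangle_{L^2(\Omega;\CC^{2N})}$, which is precisely the Gram matrix between the orthonormal bases $\{u_{m\vk}\}_{m\in\mc{N}}$ and $\{u_{n\vk'}\}_{n\in\mc{N}}$ for $\range(\Pi(\vk))$ and $\range(\Pi(\vk'))$.

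Next I would show invertibility of $M$ by the standard two-projector argument. Suppose $Mx = \vzero$ for some $x \in \CC^{|\mc{N}|}$, and set $v := \sum_{n\in\mc{N}} x_n u_{n\vk'} \in \range(\Pi(\vk'))$. The relation $Mx = \vzero$ says $\langle u_{m\vk}, v\rangle = 0$ for every $m \in \mc{N}$, so $\Pi(\vk) v = \vzero$. Therefore
\begin{equation*}
\|v\|^2 = \langle v, \Pi(\vk') v\rangle = \langle v, (\Pi(\vk') - \Pi(\vk)) v\rangle \leq \|\Pi(\vk) - \Pi(\vk')\| \cdot \|v\|^2,
\end{equation*}
and the hypothesis $\|\Pi(\vk) - \Pi(\vk')\| < 1$ forces $v = \vzero$; orthonormality of $\{u_{n\vk'}\}$ then yields $x = \vzero$. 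The square matrix $\Lambda_{\vk}(\vk' - \vk) = M$ is therefore injective hence of full rank, proving the lemma with $\vG = \vzero$.

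There is essentially no hard step: the argument is structural, and the only bookkeeping point is to confirm that $\Pi(\vk)$ appearing in \cref{assume:grid} is realized on the common Hilbert space $L^2(\Omega;\CC^{2N})$ via the periodic parts $u_{m\vk}$, so that $\|\Pi(\vk) - \Pi(\vk')\|$ in the hypothesis and the Gram matrix $M$ above are built from the same representatives. This is the standard convention for Bloch fiber bundles and is compatible with the Fourier expansion of the form factor used throughout \cref{sec:flat-band-inter}.
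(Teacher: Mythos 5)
Your proof is correct: with the paper's conventions (periodic gauge, Parseval for Fourier coefficients on $\Omega$, and $\Pi(\vk)$ acting on the common fiber $L^2(\Omega;\CC^{2N})$ as the projector onto $\operatorname{span}\{u_{n\vk}\}_{n\in\mc{N}}$, which is the only reading under which \cref{assume:grid} makes sense), $\Lambda_{\vk}(\vk'-\vk)$ is exactly the overlap matrix of the two flat-band frames, and the two-projection estimate you give shows it is invertible, which even yields the slightly stronger conclusion that $\vG=\vzero$ always works. Note that this paper does not reprove the lemma but imports it from \cite{BeckerLinStubbs2023}; your Gram-matrix argument is the standard route to this statement and is consistent with the form-factor identities used here (e.g.\ \cref{eq:rho-fourier-transform}), so no gap remains.
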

  We now turn to prove \cref{prop:pseudo-ferromagnetism}.
  \begin{proof}[Proof of~\cref{prop:pseudo-ferromagnetism}]
    Let us first fix some $\vk, \vk' \in \mc{K}$ and some $\vq' \in \mc{K} + \Gamma^{*}$.
    We can then decompose $\vq' = \vq + \vG$ where $\vq \in \mc{K}$ and $\vG \in \Gamma^{*}$.
    For this choice of $\vq + \vG$, \cref{eq:hf-comm-condition} implies that for all $\vk, \vk' \in \mc{K}$ 
    \begin{equation}
      \label{eq:hf-comm-equation}
      P(\vk, \vk') \Lambda_{\vk'}(\vq + \vG) - \Lambda_{\vk}(\vq + \vG) P(\vk + \vq, \vk' + \vq) = 0.
    \end{equation}
    If $\Lambda_{\vk}(\vq + \vG)$ is invertible then we have that
    \begin{equation}
      \label{eq:fbi-ferromagnetic}
      P(\vk + \vq, \vk' + \vq) = \Lambda_{\vk}(\vq + \vG)^{-1} P(\vk, \vk') \Lambda_{\vk'}(\vq + \vG).
    \end{equation}
    While we cannot assume that $\Lambda_{\vk}(\vq + \vG)$ is invertible in general, if we choose $\vq$ sufficiently small, then by~\cref{lem:full-rank} there exists a $\vG$ so that $\Lambda_{\vk}(\vq + \vG)$ must be invertible.

    By~\cref{assume:grid}, we can construct a sequence $\{ \vk_{i} \}_{i=1}^{L}$ which connects $\vk$ to $\vk + \vq$ so that $\| \Pi(\vk_{i+1}) - \Pi(\vk_{i}) \| < 1$ for $i = 1, \cdots, L - 1$.
    Along this path, there always exists a $\vG_{i}$ so that the form factor matrix $\Lambda_{\vk_{i}}(\vk_{i+1} - \vk_{i} + \vG_{i})$ is invertible.
    Therefore, we can prove the proposition by recursively applying~\cref{eq:fbi-ferromagnetic}.

    To make the induction step more clear, we define the momentum difference $\Delta := \vk' - \vk$ and the sequence of differences $\delta_{i} = \vk_{i+1} - \vk_{i}$.
    By~\cref{lem:full-rank}, there exists a $\vG_{1} \in \Gamma^{*}$ so that $\Lambda_{\vk_{1}}(\delta_{1} + \vG_{1})$ is invertible.
    Therefore, by~\cref{eq:fbi-ferromagnetic} 
    \begin{equation}
      \begin{split}
        P(\vk_{2}, \vk_{2} + \Delta) & = P(\vk_{1} + \delta_{1}, \vk_{1} + \Delta + \delta_{1}) \\
                                     & = \Lambda_{\vk_{1}}(\delta_{1} + \vG_{1})^{-1} P(\vk_{1}, \vk_{1} + \Delta) \Lambda_{\vk_{1} + \Delta}(\delta_{1} + \vG_{1}).
      \end{split}
    \end{equation}
    We can then apply the same argument to the next momentum in the sequence $\vk_{3}$
    \begin{equation}
      \begin{split}
        P(\vk_{3}, \vk_{3} + \Delta)
        & = P(\vk_{2} + \delta_{2}, \vk_{2} + \Delta + \delta_{2}) \\
        & = \Lambda_{\vk_{2}}(\delta_{2} + \vG_{2})^{-1} P(\vk_{2}, \vk_{2} + \Delta) \Lambda_{\vk_{2} + \Delta}(\delta_{2} + \vG_{2}).
      \end{split}
    \end{equation}
    Thus, by repeatedly applying~\cref{eq:fbi-ferromagnetic} along the path $\{ \vk_{i} \}_{i=1}^{L}$ we conclude
    \begin{equation}
      \label{eq:ferromagnetic-calc}
      P(\vk_{L}, \vk_{L} + \Delta) = \left(\prod_{i=1}^{L-1} \Lambda_{\vk_{i}}(\delta_{i} + \vG_{i}) \right)^{-1} P(\vk_{1}, \vk_{1} + \Delta) \left( \prod_{i=1}^{L-1} \Lambda_{\vk_{i} + \Delta}(\delta_{i} + \vG_{i}) \right).
    \end{equation}
    Therefore, since $\vk_{1} = \vk$, $\Delta = \vk' - \vk$, and $\vk_{L} = \vk + \vq$ this proves the proposition.
  \end{proof}

  \section{The Sylvester Equation and Translation Breaking}
  \label{sec:sylv-equat-transl}
  While we considered~\cref{eq:hf-comm-equation} with $\vq \neq \vzero$ in~\cref{sec:pseudo-ferr-flat}, let us now consider the case when $\vq = \vzero$.
  By choosing $\vq = \vzero$, from~\cref{eq:hf-comm-equation} we see that to be a ground state $P(\vk, \vk')$ must satisfy
  \begin{equation}
    \label{eq:commutator-q0}
    P(\vk, \vk') \Lambda_{\vk'}(\vG) - \Lambda_{\vk}(\vG) P(\vk, \vk') = 0 \quad \forall \vk, \vk' \in \mc{K},~ \forall \vG \in \Gamma^{*}.
  \end{equation}
  The equation $X A - B X = C$ for some fixed $A, B, C$ is known as the Sylvester equation. A simple condition ensures that this equation has a unique solution (see \cite[Theorem VII.2.1]{Bhatia1997} with $C=0$).
  \begin{lemma}
    \label{lem:sylvester}
    Consider the Sylvester-type equation with $A, B \in \CC^{N \times N}$,
    \begin{equation}
      \label{eq:sylvester}
      X A - B X = 0 \text{ for }X \in \mathbb C^{N \times N}.
    \end{equation}
    The value $X = 0$ is always a solution to this equation.
    Furthermore, $X = 0$ is the unique solution if the eigenvalues of $A$ and $B$ are disjoint.
  \end{lemma}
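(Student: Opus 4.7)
The plan is to give the standard Cayley--Hamilton proof of this classical Sylvester fact, which is both short and self-contained and requires no spectral-theoretic machinery beyond the characteristic polynomial.

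First, I would observe that $X=0$ solves \cref{eq:sylvester} trivially, so only the uniqueness claim under disjoint spectra needs a proof. The starting point is to rewrite \cref{eq:sylvester} as $BX = XA$ and iterate: a simple induction on $k$ yields $B^k X = X A^k$ for every $k\ge 0$, and hence by linearity $p(B)\,X = X\,p(A)$ for every polynomial $p\in\CC[t]$. This is the only structural identity I will need.

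Next, I would take $p = \chi_A$, the characteristic polynomial of $A$. By the Cayley--Hamilton theorem $\chi_A(A)=0$, so the displayed identity collapses to
\begin{equation}
\chi_A(B)\,X \;=\; X\,\chi_A(A) \;=\; 0.
\end{equation}
The crux is then to show that $\chi_A(B)$ is invertible under the disjoint-spectrum hypothesis. Factoring $\chi_A(t)=\prod_{i=1}^{N}(t-\lambda_i)$ over the eigenvalues $\lambda_i$ of $A$, we have $\chi_A(B)=\prod_{i=1}^{N}(B-\lambda_i I)$. For each $i$, the eigenvalues of $B-\lambda_i I$ are $\mu_j-\lambda_i$ where $\{\mu_j\}$ is the spectrum of $B$; these are all nonzero precisely because $\lambda_i\notin\operatorname{spec}(B)$. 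Hence each factor $B-\lambda_i I$ is invertible, so their product $\chi_A(B)$ is invertible, and the identity above forces $X=0$.

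I do not expect a real obstacle: the proof is essentially algebraic, the only ``named'' ingredient is Cayley--Hamilton, and the disjoint-spectrum hypothesis is used exactly once to ensure $\chi_A(B)$ is invertible. If a reader prefers a more operator-theoretic framing, the same conclusion can alternatively be packaged by noting that the linear map $\mathcal{L}:X\mapsto BX-XA$ on $\CC^{N\times N}$ has spectrum $\{\mu_j-\lambda_i\}$, which misses $0$ exactly when $\operatorname{spec}(A)\cap\operatorname{spec}(B)=\emptyset$; but I would keep the Cayley--Hamilton presentation since it is elementary and directly yields the uniqueness statement as it is written in the lemma.
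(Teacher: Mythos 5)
Your proof is correct. Note, however, that the paper does not actually prove \cref{lem:sylvester}; it simply cites \cite[Theorem~VII.2.1]{Bhatia1997} (the Sylvester--Rosenblum theorem, specialized to $C=0$). So you are supplying an argument where the paper defers to a reference. The proof in Bhatia is operator-theoretic: it establishes that the Sylvester operator $\mathcal{L}\colon X\mapsto BX-XA$ has spectrum $\{\mu_j-\lambda_i\}$ (a Kronecker-sum / spectral-mapping argument, with the inhomogeneous solution given by a Riesz contour integral $X=\tfrac{1}{2\pi i}\oint (B-zI)^{-1}C(A-zI)^{-1}\,dz$), so disjointness of the spectra keeps $0$ out of $\operatorname{spec}(\mathcal{L})$ and yields both existence and uniqueness for the full equation $BX-XA=C$. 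Your Cayley--Hamilton route is more elementary and entirely algebraic: from $BX=XA$ you get $p(B)X=Xp(A)$ for every polynomial $p$, and taking $p=\chi_A$ together with Cayley--Hamilton forces $\chi_A(B)X=0$, with $\chi_A(B)=\prod_i(B-\lambda_i I)$ invertible precisely by the disjoint-spectrum hypothesis. This buys self-containedness and avoids any resolvent machinery, at the cost of not producing an explicit solution formula in the inhomogeneous case (which the lemma does not require anyway). Both arguments are standard and either would be acceptable here; your version is arguably the better fit for an in-text proof of the homogeneous statement as written.
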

  Combining~\cref{eq:commutator-q0,lem:sylvester} we have the following condition to rule out translation symmetry breaking.
  \begin{proposition}[No translation symmetry breaking]
    \label{prop:no-translation-breaking}
    Suppose that $P$ is the 1-RDM of a Hartree-Fock ground state of a frustration-free FBI Hamiltonian~\cref{eq:h-fbi}.
    For any $\vk, \vk' \in \mc{K}$, if there exists a $\vG \in \Gamma^*$ so that the eigenvalues of $\Lambda_{\vk}(\vG)$ and $\Lambda_{\vk'}(\vG)$ are disjoint, then $P(\vk, \vk') = 0$.
  \end{proposition}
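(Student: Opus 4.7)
The plan is to recognize the relation \cref{eq:commutator-q0} as a homogeneous Sylvester equation for the unknown block $P(\vk,\vk')$ and then invoke \cref{lem:sylvester}. Specifically, for the fixed $\vk, \vk' \in \mc{K}$ in the statement, I set $X := P(\vk,\vk')$, $A := \Lambda_{\vk'}(\vG)$, and $B := \Lambda_{\vk}(\vG)$. The Hartree-Fock ground state condition \cref{eq:commutator-q0}, applied at the particular $\vG \in \Gamma^*$ supplied by the hypothesis, reads $XA - BX = 0$.

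Because the eigenvalues of $A$ and $B$ are assumed disjoint by hypothesis, \cref{lem:sylvester} forces $X = 0$, that is, $P(\vk,\vk') = 0$, as claimed. Note that we only require a single good $\vG$; the behavior of the form factors $\Lambda_{\vk}(\vG')$ at other reciprocal lattice vectors $\vG'$ plays no role in the argument, and no translation-invariance assumption on $P$ is needed either.

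I do not anticipate any serious obstacle: the heavy lifting has already been done upstream, in deriving the commutator condition \cref{eq:hf-comm-condition} from the energy calculation in \cref{sec:hartree-fock-energy} and in recording the Sylvester uniqueness result in \cref{lem:sylvester}. The proof reduces to a direct pattern match between these two ingredients, once one observes that \cref{eq:commutator-q0} is simply the $(\vk,\vk')$ matrix block of the global commutator $[P, \Lambda(\vG) \pi_{\vG}] = 0$ at $\vq = \vzero$, extracted using the block-diagonal form of $\Lambda(\vG)$ and the trivial action of $\pi_{\vzero + \vG} = \pi_{\vG}$ on the basis $\ket{n,\vk}$ (which merely relabels momenta by $\Gamma^*$-translation, under which $P$ and $\Lambda_{\vk}$ are both invariant).
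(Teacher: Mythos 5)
Your proof is correct and matches the paper's reasoning exactly: the paper itself introduces \cref{prop:no-translation-breaking} with the one-line remark ``Combining~\cref{eq:commutator-q0} and \cref{lem:sylvester}\ldots'' and offers no further argument, which is precisely the pattern match you carry out with $X = P(\vk,\vk')$, $A = \Lambda_{\vk'}(\vG)$, $B = \Lambda_{\vk}(\vG)$. Your closing observation, that \cref{eq:commutator-q0} is the $\vq = \vzero$ specialization of the blockwise commutator condition \cref{eq:hf-comm-condition}, is also the route the paper takes (via \cref{eq:hf-comm-equation}).
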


  If the eigenvalues of $A$ and $B$ in~\cref{eq:sylvester} are not disjoint then there can be many possible choices for $X$ which solve $X A - B X = 0$.
  While characterizing all such possible solutions analytically is difficult in the general case, for the special case of TBG-2 there is additional structure which allow us to precisely determine the structure of the degeneracy in solutions to~\cref{eq:commutator-q0}.

  We begin by proving that no translation breaking occurs in spinless, valleyless TBG-2 in~\cref{sec:application-to-spinl} and then characterize the degeneracy in valleyful and spinful models in~\cref{sec:extens-vall-spinf}.
  Finally in~\cref{sec:alg-for-translation-breaking}, we outline a numerical algorithm which can be used to determine a spanning set for the vector space of solutions to~\cref{eq:commutator-q0} and prove its correctness.

  \subsection{Application to Spinless, Valleyless TBG}
  \label{sec:application-to-spinl}
  We begin by converting~\cref{eq:commutator-q0} from an equation depending on $\vG \in \Gamma^*$ to an equation depending on $\vr \in \Omega$ by taking the Fourier transform.

  First, we define $\rho_{\vk,\vk'}(\vr) \in \CC^{(2 M) \times (2 M)}$ as follows
  \begin{equation}
    [\rho_{\vk,\vk'}(\vr)]_{m,n} := \braket{u_{m\vk}(\vr), u_{n\vk}(\vr)}.
  \end{equation}
  Since the periodic Bloch functions, $u_{n\vk}(\vr)$, are periodic with respect to $\Gamma$, $\rho_{\vk,\vk'}(\vr)$ is also periodic with respect to $\Gamma$; that is, $\rho_{\vk,\vk'}(\vr + \vR) = \rho_{\vk,\vk'}(\vr)$ for all $\vR \in \Gamma$. 
  A straightforward calculation  shows that that the function $\vG \mapsto \Lambda_{\vk}(\vq + \vG)$ coincides with the Fourier coefficients of $\rho_{\vk, \vk + \vq}(\vr)$ when considered as a periodic function on the unit cell $\Omega$.
  More specifically, 
  \begin{equation}
    \label{eq:rho-fourier-transform}
    \rho_{\vk,\vk + \vq}(\vr) = \sum_{\vG \in \Gamma^*} e^{-i \vG \cdot \vr} \Lambda_{\vk}(\vq + \vG).
  \end{equation}
  Therefore, by multiplying by $e^{i \vG \cdot \vr}$ and summing over $\vG \in \Gamma^*$ we can transform~\cref{eq:commutator-q0} to real space as follows
  \begin{equation}
    \label{eq:commutator-q0-real}
    P(\vk, \vk') \rho_{\vk',\vk'}(\vr) - \rho_{\vk,\vk}(\vr) P(\vk, \vk') = 0, \qquad \forall \vk, \vk' \in \mc{K}, ~\forall \vr \in \Omega.
  \end{equation}
  The Hamiltonian for TBG-2 in the chiral limit takes the form~\cite{TarnopolskyKruchkovVishwanath2019,BeckerEmbreeWittstenEtAl2022} 
  \begin{equation}
    \label{eq:family}
    H_{\vk} = \begin{bmatrix} 0 & D_{\vk}(\alpha)^{\dagger} \\ D_{\vk}(\alpha) & 0 \end{bmatrix}.
  \end{equation}
  Here, $D_{\vk}(\alpha) = D(\alpha)+ (k_{x} + i k_{y})I_{2 \times 2}$ where $D(\alpha) = \begin{bmatrix} 
    2\bar{D} & \alpha V(\vr) \\
    \alpha V(-\vr) & 2 \bar{D}  
  \end{bmatrix}$ with $\overline{D}=-i \partial_{\bar z}$ and $V(\vr)= \sum_{n=0}^2 \omega^n e^{-i \vq_n \cdot \vr}$ with $\omega=e^{2\pi i /3}$. 

  The Hamiltonian $H_{\vk}$ satisfies two symmetries $\mc{Q}$ and $\mc{L}$ whose actions are defined as follows
  \begin{equation}
  \label{eq:symmetries}
          \mc{Q} u(\vr) := \begin{bmatrix}
              & & 1 & \\
              & & & 1 \\
              1 & & & \\
              & 1 & & 
          \end{bmatrix} \overline{u(-\vr)} \qquad
          \mc{L} u(\vr) := \begin{bmatrix}
              & 1 & & \\
             -1 & & & \\
               & & & 1 \\
              & & -1 & 
          \end{bmatrix} \overline{u(-\vr)} \\
  \end{equation}
  Note that $\mc{Q}$ maps $\vk$ to $\vk$ and $\mc{L}$ maps $\vk$ to $-\vk$ and $[\mc{Q},\mc{L} ] = 0$.
  
  Due to $\mc{Q}$ symmetry, the zero eigenvectors of $H_{\vk}$ can be written as $\begin{bmatrix} 0 & v(\vr) \end{bmatrix}^{\top}$ and $\begin{bmatrix} \,\overline{v(-\vr)} & 0 \end{bmatrix}^{\top}$.
  Hence after a basis rotation, $\rho_{\vk,\vk}(\vr)$ can be written as follows
  \begin{equation}
    \label{eq:pair-product-tbg2}
    \rho_{\vk,\vk}(\vr) =
    \begin{bmatrix}
      \| u_{\vk}(\vr) \|^2 & \\
                           & \| u_{\vk}(-\vr) \|^2
    \end{bmatrix}.
  \end{equation}
  Due to $\mc{L}$ symmetry and the fact that $[\mc{Q}, \mc{L} ] = 0$, we may also choose an appropriate basis change so that a state $u_{\mathbf 0}$ satisfying $D(\alpha)u_{\mathbf 0}=0$ has the property
  \[ \left\Vert u_{\mathbf 0} (\vr)  \right\Vert =\left\Vert  \mathcal Lu_{\mathbf 0}(\vr) \right\Vert=\left\Vert  u_{\mathbf 0}(-\vr) \right\Vert,   \]
  where, in this particular case, $\mathcal L$ just denotes the upper $2\times 2$ block of the symmetry defined in \eqref{eq:symmetries}.
    With these observations, we can now characterize all Hartree-Fock ground states of the FBI model for TBG-2.
  \begin{theorem}
    \label{thm:tbg-2}
    For TBG-2 at half-filling without the valley or spin degrees of freedom, the Hartree-Fock ground states take the form
    \begin{equation}
      P(\vk, \vk')
      =
      \begin{cases}
        P_{0} & \vk = \vk' \\
        0 & \vk \neq \vk'
      \end{cases}
    \end{equation}
    where $P_{0}$ the same for all $\vk \in \mc{K}$ and equal to either $\diag(1,0)$ or $\diag(0,1)$.
  \end{theorem}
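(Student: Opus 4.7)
The plan is to analyze the commutator condition \cref{eq:hf-comm-condition} via its real-space form \cref{eq:commutator-q0-real}, namely $\rho_{\vk,\vk}(\vr)\,P(\vk,\vk') = P(\vk,\vk')\,\rho_{\vk',\vk'}(\vr)$ for every $\vk,\vk' \in \mc{K}$ and every $\vr \in \Omega$. After the $\mc Q$-adapted basis change of \cref{eq:pair-product-tbg2}, $\rho_{\vk,\vk}(\vr) = \diag(r_1(\vk,\vr),\,r_2(\vk,\vr))$ with $r_1(\vk,\vr)=\|u_\vk(\vr)\|^2$ and $r_2(\vk,\vr)=\|u_\vk(-\vr)\|^2$, so the identity splits into the four scalar relations $(r_i(\vk,\vr)-r_j(\vk',\vr))\,[P(\vk,\vk')]_{ij}=0$. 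In the same basis the two flat bands are supported on orthogonal sublattices, so every off-sublattice contribution in \cref{eq:form-factor-def} vanishes and each form factor $\Lambda_\vk(\vG)$ is itself diagonal; I would record this as a preliminary lemma before running the main argument.

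The first step is to kill the off-diagonal blocks, i.e., to show that $P(\vk,\vk')=0$ whenever $\vk\ne\vk'$ in $\mc K$. Here I would invoke \cref{prop:no-translation-breaking}: because $\Lambda_\vk(\vG)$ is diagonal, its eigenvalues are just its two diagonal entries, and the real-analyticity of the chiral flat-band eigenfunctions (via the theta-function representation of \cite{TarnopolskyKruchkovVishwanath2019}) lets me choose, for each $\vk\ne\vk'$, a reciprocal vector $\vG$ for which the eigenvalues of $\Lambda_\vk(\vG)$ are disjoint from those of $\Lambda_{\vk'}(\vG)$. Equivalently, one works in real space and rules out each of the functional identities $r_i(\vk,\cdot)\equiv r_j(\vk',\cdot)$ on $\Omega$; either route forces every entry of $P(\vk,\vk')$ to vanish.

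The second step constrains the diagonal blocks. Setting $\vk'=\vk$ in the real-space equation gives $[P(\vk,\vk),\,\diag(r_1(\vk,\vr),r_2(\vk,\vr))]=0$ for all $\vr$. For a generic $\vk\ne\vzero$ one has $r_1(\vk,\cdot)\not\equiv r_2(\vk,\cdot)$, so $P(\vk,\vk)$ must be diagonal; combining with $P^2=P$, $P^\dagger=P$, and the uniform filling corollary \cref{coro:uniform-filling} (which pins each diagonal block to trace one at half-filling) yields $P(\vk,\vk)\in\{\diag(1,0),\diag(0,1)\}$. The third step promotes this to $\vk$-independence: \cref{prop:pseudo-ferromagnetism} with $\vk_1=\vk_2$ gives $P(\vk+\vq',\vk+\vq')=B_\vk(\vq')^{-1}P(\vk,\vk)B_\vk(\vq')$, where $B_\vk(\vq')$ is a product of diagonal form factors and therefore diagonal; diagonal conjugation separately fixes $\diag(1,0)$ and $\diag(0,1)$. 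The exceptional point $\vk=\vzero$, at which $r_1\equiv r_2$ by $\mc L$-symmetry so the commutator argument alone is inconclusive, is handled by connecting it through \cref{assume:grid} to a nearby generic momentum and applying the same diagonal conjugation.

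The main obstacle I anticipate is the rigidity input in the first step: showing that $\vk\mapsto\Lambda_\vk(\vG)$, or equivalently the pointwise pair $(\|u_\vk(\vr)\|^2,\|u_\vk(-\vr)\|^2)$, is sufficiently non-degenerate to separate distinct momenta in $\mc K$. This is where one genuinely has to exploit the specific analytic structure of the chiral Bistritzer--MacDonald model rather than generic flat-band considerations; the remainder of the proof is essentially linear algebra driven by the simultaneous diagonal reduction of $\rho_{\vk,\vk}(\vr)$ and $\Lambda_\vk(\vG)$.
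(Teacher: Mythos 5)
Your overall architecture (pass to the $\mc{Q}$-adapted sublattice basis, exploit that $\rho_{\vk,\vk}(\vr)$ — and hence every $\Lambda_\vk(\vG)$ — is diagonal, kill the off-diagonal blocks of $P$ via the commutator/Sylvester condition, then pin the diagonal blocks) is the same as the paper's, and your Step~2, Step~3 and the $\vk=\vzero$ patch are essentially the argument given there. However, Step~1 has a genuine gap at the pairs $\vk' = -\vk$, and it is exactly this case that the paper must work hardest to dispose of. Because of the $\mc{L}$-symmetry of the chiral model one has the exact identity $\|u_{-\vk}(\vr)\| = \|u_\vk(-\vr)\|$ for all $\vr$, so the diagonal entries of $\Lambda_{-\vk}(\vG)$ are precisely those of $\Lambda_{\vk}(\vG)$ with the two rows swapped: the spectra of $\Lambda_\vk(\vG)$ and $\Lambda_{-\vk}(\vG)$ coincide \emph{as sets for every} $\vG$. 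So \cref{prop:no-translation-breaking} can never be invoked for this pair, and your claim that ``real-analyticity lets me choose, for each $\vk\neq\vk'$, a $\vG$ with disjoint eigenvalues'' is false for $\vk' = -\vk$. Equivalently, in your real-space phrasing the functional identity $r_1(\vk,\cdot)\equiv r_2(-\vk,\cdot)$ holds identically, so the $(1,2)$ and $(2,1)$ entries of $P(\vk,-\vk)$ are not forced to vanish by the Sylvester equation at all.

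The paper closes this hole with a second, non-Sylvester ingredient: the generalized-ferromagnetism relation \cref{eq:symmetry}. Once one knows $P(\vk,\vk')=0$ whenever $\vk'\notin\{\vk,-\vk\}$, one picks $\vq'\notin\Gamma^*$ and uses $\operatorname{rank} P(\vk,-\vk)=\operatorname{rank} P(\vk+\vq',-\vk+\vq')$; since the latter pair is no longer of the antipodal form, that rank is zero, hence $P(\vk,-\vk)=0$. Some such extra argument is necessary — the commutator condition alone admits nonzero antipodal blocks. (The paper's reliance on the specific chiral-limit structure shows up precisely here and in the choice $\vr=\frac{4\pi}{3\sqrt3}(k_2,-k_1)^\top$ as the unique zero of $u_\vk$, which is cleaner than an abstract appeal to analyticity.) Finally, once all the candidates are identified you still must verify the trace condition \cref{eq:hf-trace-condition}; this requires the sum rule \cref{eq:sum-rule} from the earlier work and is omitted from your proposal.

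Your preliminary lemma that each $\Lambda_\vk(\vG)$ is itself diagonal in the $\mc{Q}$-adapted basis is correct and is a nice way to package the content of the paper's diagonal $\rho_{\vk,\vk'}(\vr)$; it does not, however, provide a shortcut around the antipodal degeneracy above, since that degeneracy lives precisely on the diagonal entries.
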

  \begin{remark}
    The states found in~\cref{thm:tbg-2} are the same ferromagnetic Slater determinant states found in our previous work \cite{BeckerLinStubbs2023}.
  \end{remark}
  \begin{proof}
    By \cite[Lemma $3.1$]{BeckerHumbertZworski2022a}, we know that at a magic $\alpha$ the one-particle Hamiltonian $H$ exhibits precisely two flat bands at energy zero
    \begin{equation}
      \label{eq:identity}
      \vu_{\vk}(\vr) = F_{\vk}(\vr) \vu_{\mathbf 0}(\vr) \in \ker(D(\alpha)+(k_1+ik_2)I_{2\times 2}),
    \end{equation}
    where
    \[
      F_{\vk}(\vr):=e^{\frac{(k_1+ik_2)}{2}(-i(1+\omega)x+(\omega-1)y)} \frac{ \theta ( \frac{3(x_1+ix_2)}{4\pi i \omega} + \frac{k_1+ik_2}{\sqrt{3}\omega} ) }{
        \theta\Big( \frac{3(x_1+ix_2)}{4\pi i \omega}\Big)}.
    \]
    Since $\theta$ is an odd function, $F_{\vk}(\vr)$ satisfies the reflection symmetry 
    \[
      F_{\vk}(\vr)=F_{-\vk}(-\vr).
    \]
    The function $\vu_{\vk}(\vr)$ has a unique zero at $\vr = \frac{4\pi}{3\sqrt{3}}(k_2,-k_1)^\top$ per unit cell (note that both $\vu_{\mathbf 0}$ and $\theta$ have a simple zero at $\vr=\mathbf 0$). 
    This implies that 
    \[
      \Vert u_{\vk}(\vr)\Vert =\Vert F_{\vk}(\vr) u_{\mathbf 0}(\vr)\Vert =\Vert F_{-\vk}(-\vr) u_{\mathbf 0}(-\vr)\Vert =  \Vert u_{-\vk}(-\vr)\Vert.
    \]

    To show that $P(\vk, \vk') =0$ for a disjoint pair $\vk,\vk' \in \mathcal K$, it suffices to show by  \eqref{eq:commutator-q0-real}, that there are $\vr,\vr' \in \Omega$ such that 
    \[ {\| u_{\vk}(\vr) \| \neq \| u_{\vk'}(\vr) \|} \text{ and }\| u_{\vk}(\vr') \|  \neq \| u_{\vk'}(-\vr') \|.  \]
    To satisfy the first constraint, we may choose the unique point $\vr = \frac{4\pi}{3\sqrt{3}}(k_2,-k_1)^\top$ at which $u_{\vk}(\vr)=0,$ then $u_{\vk'}(\vr)\neq 0.$

    For this choice of $\vr,$ we also have 
    \[u_{\vk'}(-\vr) \neq 0 \text{ as long as }\vk' \neq -\vk. \]
    This implies that 
    \[ P(\vk,\vk')=0\text{ for }\vk' \neq -\vk.\]

While we cannot directly conclude that $P(\vk,-\vk)=0$ for $\vk \notin \Gamma^*$, we may invoke \cref{eq:symmetry}, and use that we have previously shown that $P(\vk, \vk') = 0$ whenever $\vk \notin \{\vk',-\vk'\}$.
In particular, \cref{eq:symmetry} implies that for any $\vq'\in \R^2 \setminus \Gamma^*$,
\[
  \operatorname{rank}(P(\vk,-\vk))=\operatorname{rank}(P(\vk+\vq',-\vk+\vq'))=0.
\]
This implies that
\[P(\vk, -\vk) = 0\text{ for all }\vk \in \mc{K} \setminus \{ \vzero \}.\]

Finally, for $\vk=\vk'\neq \mathbf 0$ it follows from specializing in \eqref{eq:pair-product-tbg2} to $\vr = \frac{4\pi}{3\sqrt{3}}(k_2,-k_1)^\top$, in which case $\rho_{\vk,\vk}(\vr)= \operatorname{diag}(0,\Vert u_{\vk}(-\vr)\Vert^2)$ with $u_{\vk}(-\vr) \neq 0.$ 
This implies that $P(\vk,\vk)$ is diagonal.
Since $P^{2} = P$ and $P(\vk, \vk') = 0$ whenever $\vk \neq \vk'$, we conclude that $P(\vk,\vk)^2 =P(\vk,\vk)$ and so its eigenvalues are $0$ and $1$.
At half-filling, by \cref{coro:uniform-filling}, we know $\Tr{(P(\vk, \vk))} = 1$ and so $P(\vk,\vk)=\operatorname{diag}(1,0)$ or $P(\vk,\vk)=\operatorname{diag}(0,1)$ for all $\vk$.

Finally, for $\vk'=\vk=\mathbf 0$, the condition \eqref{eq:pair-product-tbg2} does not imply any restrictions on $P(\mathbf 0,\mathbf 0),$ since $\rho_{\mathbf 0,\mathbf 0}$ is proportional to the identity matrix. 
However, the proof of~\cref{prop:pseudo-ferromagnetism} shows that one can choose diagonal $B_{\vk}(\mathbf q')$. This implies by \eqref{eq:pair-product-tbg2} that $\vk \mapsto P(\vk,\vk)$ is either $\operatorname{diag}(1,0)$ or $\operatorname{diag}(0,1)$ for all $\vk$.

To complete the proof, we must verify that~\cref{eq:hf-trace-condition} also vanishes.
This follows from the sum rule
\begin{equation}
  \label{eq:sum-rule}
  \sum_{\vk} \Tr{\left(\Lambda_{\vk}(\vG) \begin{bmatrix} 1 & \\ & -1 \end{bmatrix}\right)} = 0 \quad \forall \vG \in \Gamma^{*}
\end{equation}
proven in our previous work~\cite[Lemma 4.3]{BeckerLinStubbs2023}.

\end{proof}

\subsection{Extension to Valleyful and Spinful Models}
\label{sec:extens-vall-spinf}
In the previous section, we showed that in single valley TBG-2 the only possible Hartree-Fock ground states are the two ferromagnetic Slater determinants by using the specific properties of the TBG-2 eigenfunctions.
While the final conclusion depended on the specific form of these eigenfunctions all of the steps up to and including~\cref{eq:commutator-q0-real} were done in a completely general setting (i.e. without reference to an underlying Hamiltonian).
Therefore, to translate these results to the more realistic valleyful and spinful settings we simply need to substitute in appropriate definition for $\rho_{\vk,\vk}(\vr)$.

Following the analysis in \cite{BultinckKhalafLiuEtAl2020}, for the chiral model of TBG-2 with valley, the corresponding $\rho_{\vk,\vk}(\vr)$ takes the form:
\begin{equation}
  \rho_{\vk,\vk}^{\rm V}(\vr) =
  \begin{bmatrix}
    \| u_{\vk}(\vr) \|^2 &&& \\
                         & \| u_{\vk}(-\vr) \|^2 && \\
                         && \| u_{\vk}(-\vr) \|^2 & \\
                         &&& \| u_{\vk}(\vr) \|^2
  \end{bmatrix}
\end{equation}
and when including valley and spin it takes the form
\begin{equation}
  \rho_{\vk,\vk}^{\rm V+S}(\vr) =
  \begin{bmatrix}
    \| u_{\vk}(\vr) \|^2 &&& \\
                         & \| u_{\vk}(-\vr) \|^2 && \\
                         && \| u_{\vk}(-\vr) \|^2 & \\
                         &&& \| u_{\vk}(\vr) \|^2
  \end{bmatrix} \otimes I_{2 \times 2}.
\end{equation}
We will now show that when including the valley degree of freedom, the ground state exhibits a ${\rm U}(2) \times {\rm U}(2)$ degeneracy.
A similar argument shows that in the presence of the valley \textit{and spin} degrees of freedom, the ground state is unique up to a ${\rm U}(4)\times {\rm U}(4)$ degree of freedom.

\begin{theorem}
  \label{thm:tbg-2-valley}
  At TBG-2 at half-filling with the valley degree of freedom but without spin, the Hartree-Fock ground states take the form
  \begin{equation}
    P(\vk, \vk')
    =
    \begin{cases}
      U P_{0} U^{\dagger} & \vk = \vk' \\
      0 & \vk \neq \vk'
    \end{cases}
  \end{equation}
  where $U$ and $P_{0}$ are the same for all $\vk \in \mc{K}$.
  Furthermore, $U \in {\mathrm U}(2) \times {\mathrm U}(2)$ and $P_{0}$ is  one of the following matrices $\diag(1,0,0,1)$, $\diag(1,1,0,0)$, and $\diag(0, 1, 1, 0)$.
\end{theorem}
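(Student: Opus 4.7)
The plan is to follow the structure of the proof of~\cref{thm:tbg-2}, substituting the block-diagonal $\rho^V_{\vk,\vk}(\vr)=\diag(a_\vk(\vr),b_\vk(\vr),b_\vk(\vr),a_\vk(\vr))$, with $a_\vk(\vr):=\|u_\vk(\vr)\|^2$ and $b_\vk(\vr):=\|u_\vk(-\vr)\|^2$, for the spinless $\rho_{\vk,\vk}$, and to carry along the extra degeneracy inside each eigenspace $\{1,4\}$ and $\{2,3\}$ of $\rho^V_{\vk,\vk}$.

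For the no-translation-breaking step I would expand the commutator condition~\cref{eq:commutator-q0-real} entrywise as $[P(\vk,\vk')]_{ij}\bigl(\rho^V_{\vk',\vk'}(\vr)_{jj}-\rho^V_{\vk,\vk}(\vr)_{ii}\bigr)=0$. The $16$ scalar equations group into four classes according to which eigenspace the indices $i,j$ lie in, and the nonvanishing conditions reduce to exactly the two conditions $\|u_\vk(\vr)\|\neq\|u_{\vk'}(\vr)\|$ and $\|u_\vk(\vr')\|\neq\|u_{\vk'}(-\vr')\|$ already established in the proof of~\cref{thm:tbg-2}. Choosing $\vr=\tfrac{4\pi}{3\sqrt{3}}(k_2,-k_1)^{\top}$ to be the unique zero of $u_\vk$ in the unit cell, both conditions hold for every $\vk'\neq\pm\vk$, giving $P(\vk,\vk')=0$ for such pairs. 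The residual case $\vk'=-\vk$ with $\vk\neq\vzero$ is ruled out by the same ferromagnetism trick from~\cref{thm:tbg-2}: \cref{prop:pseudo-ferromagnetism} lets us shift $(\vk,-\vk)$ by a small $\vq'$ into the regime already handled, forcing the rank of $P(\vk,-\vk)$ to vanish.

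Specializing~\cref{eq:commutator-q0-real} to $\vk=\vk'$ then forces $P(\vk,\vk)$ to commute with $\rho^V_{\vk,\vk}(\vr)$ for every $\vr$. Evaluating at the zero $\vr=\tfrac{4\pi}{3\sqrt{3}}(k_2,-k_1)^\top$ gives $\rho^V_{\vk,\vk}(\vr)=\diag(0,b,b,0)$ with $b\neq 0$ (for $\vk\neq\vzero$), whose commutant is exactly the space of matrices block-diagonal in the decomposition $\{1,4\}\oplus\{2,3\}$. Writing $P(\vk,\vk)=P^a(\vk)\oplus P^b(\vk)$ in this decomposition, the conditions $P^\dagger=P$ and $P^2=P$ force each $P^\sigma(\vk)$ to be a Hermitian orthogonal projection on $\CC^2$, and uniform filling (\cref{coro:uniform-filling}) yields $\operatorname{rank}(P^a)+\operatorname{rank}(P^b)=2$. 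The three rank splittings $(2,0)$, $(1,1)$, $(0,2)$ produce precisely the three canonical representatives $\diag(1,0,0,1)$, $\diag(1,1,0,0)$, and $\diag(0,1,1,0)$; within each class, the ${\rm U}(2)\times{\rm U}(2)$ acting independently on the two $2$-dimensional blocks sweeps out every Hermitian rank-$r$ projection in that block.

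The last and main step is to show that $P(\vk,\vk)$ is $\vk$-independent, so that a single pair $(P_0,U)$ parameterizes the state. Here I would use the hidden ${\rm U}(2)\times{\rm U}(2)$ symmetry: commutation of every $U\in{\rm U}(2)\times{\rm U}(2)$ with $\hat{H}_{\rm FBI}$ forces commutation with each $\Lambda^V_\vk(\vq+\vG)$, which by Schur-type arguments forces $\Lambda^V_\vk(\vq+\vG)=\alpha_\vk(\vq+\vG)\,\Pi_a+\beta_\vk(\vq+\vG)\,\Pi_b$, where $\Pi_a,\Pi_b$ are the orthogonal projections onto the $a$- and $b$-eigenspaces. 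The matrices $B_\vk(\vq')$ from~\cref{prop:pseudo-ferromagnetism} are products of such form factors and so inherit the block-scalar form $b_a(\vq')\,\Pi_a+b_b(\vq')\,\Pi_b$; since $P(\vk,\vk)$ is already block-diagonal in the same decomposition, the similarity $B_\vk(\vq')^{-1}P(\vk,\vk)B_\vk(\vq')$ collapses to $P(\vk,\vk)$, giving $P(\vk+\vq',\vk+\vq')=P(\vk,\vk)$. Finally, the trace condition~\cref{eq:hf-trace-condition} is verified for each candidate $P_0$ by applying the sum rule~\cref{eq:sum-rule} blockwise. The main obstacle I expect is this last block-scalar reduction of $\Lambda^V_\vk(\vq+\vG)$: confirming rigorously (rather than via an abstract appeal to ``hidden symmetry'') that the prescribed ${\rm U}(2)\times{\rm U}(2)$ action on the band/valley indices in the given ordering truly commutes with $\Lambda^V$ requires unpacking the explicit construction of the valleyful flat-band eigenfunctions following~\cite{BultinckKhalafLiuEtAl2020} and checking the resulting inner-product identities by hand.
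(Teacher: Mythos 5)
Your proposal tracks the paper's proof closely: show $P(\vk,\vk')=0$ for $\vk\neq\vk'$ via the zero of $u_\vk$ and the ferromagnetism trick, reduce the on-diagonal commutator condition to block-diagonality of $P(\vk,\vk)$ in the $\{1,4\}\oplus\{2,3\}$ decomposition (the paper conjugates by the permutation $\Pi$ and splits into $2\times2$ blocks; your entrywise bookkeeping is equivalent), identify the three rank-splittings $(2,0),(1,1),(0,2)$ up to ${\rm U}(2)\times{\rm U}(2)$, and finally obtain $\vk$-independence because the form factors act block-scalarly on the $\{1,4\}\oplus\{2,3\}$ decomposition so the similarity in \cref{prop:pseudo-ferromagnetism} is trivial.

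The one step you flag as uncertain is indeed the one where your logic runs backward relative to the paper's. You propose to deduce the block-scalar form $\Lambda^V_\vk(\vq+\vG)=\alpha_\vk\Pi_a+\beta_\vk\Pi_b$ from the fact that every $U\in{\rm U}(2)\times{\rm U}(2)$ commutes with $\hat H_{\rm FBI}$; but commutation with the quartic sum $\hat H_{\rm FBI}$ does not, by itself, force commutation with each form factor, and in fact the hidden ${\rm U}(2)\times{\rm U}(2)$ symmetry of $\hat H_{\rm FBI}$ is a \emph{consequence} of the block-scalar structure, not a hypothesis from which it can be recovered. The paper instead gets block-scalarity of the pair products $\Pi\rho^V_{\vk,\vk+\vq}(\vr)\Pi$ (hence of the form factors, as Fourier coefficients) directly from the $\mc{Q}$ symmetry of the chiral model, citing \cite{BultinckKhalafLiuEtAl2020} for the structure of the valleyful flat-band eigenfunctions. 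If you replace your appeal to the hidden symmetry with this $\mc{Q}$-symmetry argument — which is exactly the ``unpacking'' you note would be required — the rest of your endgame (block-scalar $B_\vk(\vq')$, trivial similarity, blockwise application of the sum rule \cref{eq:sum-rule}) matches the paper and goes through.
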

\begin{proof}
  Following the same reasoning used in the proof of~\cref{thm:tbg-2}, we know that $P(\vk, \vk') = 0$ for all $\vk \neq \vk'$.
  By \cref{prop:pseudo-ferromagnetism}, it suffices to choose a single $\vk \in \mc{K}$ to fix the ground state.
  Pick any $\vk_{*} \in \mc{K}$ so that $\| u_{\vk_{*}}(\vr) \| \neq \| u_{\vk_{*}}(-\vr) \|$ for some $\vr \in \Omega$.
  To make the algebra simpler, we define the permutation matrix $\Pi$:
  \begin{equation}
    \Pi =
    \begin{bmatrix}
      1 &&& \\
        &&& 1 \\
        && 1 & \\
        & 1 && 
    \end{bmatrix}
  \end{equation}
  which satisfies $\Pi^{\dagger} = \Pi$ and $\Pi^{2} = I$.
  Applying $\Pi$ to both sides of~\cref{eq:pair-product-tbg2} we find that for $P(\vk_{*},\vk_{*})$ to be the 1-RDM of a ground state it must be that
  \begin{equation}
    \label{eq:commutator-q0-real-pi}
    \Big( \Pi P(\vk_{*}, \vk_{*}) \Pi \Big) \Big( \Pi \rho_{\vk_{*},\vk_{*}}(\vr) \Pi \Big) - \Big( \Pi \rho_{\vk_{*},\vk_{*}}(\vr) \Pi \Big) \Big( \Pi P(\vk_{*}, \vk_{*}) \Pi \Big) = 0.
  \end{equation}
  We will derive a characterization for all possible solutions $\Pi P(\vk_{*}, \vk_{*}) \Pi$ of \cref{eq:commutator-q0-real-pi} at half-filling.
  Since $\Pi$ is a unitary, this also characterizes all possible solutions $P(\vk_{*}, \vk_{*})$ satisfying~\cref{eq:commutator-q0-real} after a change of basis.

  Notice that
  \begin{equation}
  \label{eq:pair-product-form}
    \begin{split}
      \Pi \rho_{\vk_{*},\vk_{*}}^{\rm V}(\vr) \Pi & =
      \begin{bmatrix}
        \| u_{\vk_{*}}(\vr) \|^2 &&& \\
                             & \| u_{\vk_{*}}(\vr) \|^2 && \\
                             && \| u_{\vk_{*}}(-\vr) \|^2 & \\
                             &&& \| u_{\vk_{*}}(-\vr) \|^2
      \end{bmatrix} \\[1ex]
      & =
      \begin{bmatrix}
        \| u_{\vk_{*}}(\vr) \|^{2} I_{2 \times 2} & \\
                                                & \| u_{\vk_{*}}(-\vr) \|^{2} I_{2 \times 2}
      \end{bmatrix}.
    \end{split}
      \end{equation}
    So splitting $\Pi P(\vk_{*}, \vk_{*}) \Pi$ into $2 \times 2$ blocks
  \begin{equation}
    \Pi P(\vk_{*}, \vk_{*}) \Pi =:
    \begin{bmatrix}
      \tilde{P}_{11} & \tilde{P}_{12}  \\
      \tilde{P}_{12}^{\dagger} & \tilde{P}_{22}
    \end{bmatrix},
  \end{equation}
  we can explicitly compute the commutator in terms of $\tilde{P}_{11}$, $\tilde{P}_{22}$, $\tilde{P}_{12}$:
  \begin{equation}
    [ \Pi P(\vk_{*}, \vk_{*}) \Pi, \Pi \rho_{\vk_{*},\vk_{*}}^{\rm V}(\vr) \Pi ]
    =
    (\| u_{\vk_{*}}(-\vr) \|^{2} - \| u_{\vk_{*}}(\vr) \|^{2})
    \begin{bmatrix}
      & \tilde{P}_{12} \\
      - \tilde{P}_{12}^{\dagger} & 
    \end{bmatrix}.
  \end{equation}
  Since $\| u_{\vk_{*}}(-\vr) \|^{2} \neq \| u_{\vk_{*}}(\vr) \|^{2}$ for some $\vr$, for \cref{eq:pair-product-tbg2} to vanish it must be that $\tilde{P}_{12} \equiv 0$.
  Hence any solution to~\cref{eq:commutator-q0-real-pi} must be block diagonal.

  Since $P(\vk_{*},\vk') = 0$ for all $\vk_{*} \neq \vk'$, $\Pi P(\vk_{*}, \vk_{*}) \Pi$ is an orthogonal projection.
  The set of all transformations preserving this block structure while maintaining orthogonality is given by $U \in {\rm U}(2) \times {\rm U}(2)$ where $U$ takes the form
  \begin{equation}
    U =
    \begin{bmatrix}
      U_{1} & \\
            & U_{2}
    \end{bmatrix} \quad U_{1}, U_{2} \in {\rm U}(2).
  \end{equation}
  Due to \cref{coro:uniform-filling}, at half-filling, $\Pi P(\vk_{*}, \vk_{*}) \Pi$ has rank 2 and hence all possible solutions to~\cref{eq:commutator-q0-real-pi} must lie in one of three orbits:
  \begin{equation}
  \label{eq:valley-gs-rdm}
    U
    \begin{bmatrix}
      1 &&& \\
        & 1 && \\
        && 0 & \\
        &&& 0
    \end{bmatrix}
    U^{\dagger}
    \qquad
    U
    \begin{bmatrix}
      1 &&& \\
        & 0 && \\
        && 1 & \\
        &&& 0
    \end{bmatrix}
    U^{\dagger}
    \qquad
    U
    \begin{bmatrix}
      0 &&& \\
        & 0 && \\
        && 1 & \\
        &&& 1
    \end{bmatrix}.
    U^{\dagger}
  \end{equation}
Suppose we choose $\Pi P(\vk_{*}, \vk_{*}) \Pi$ to be one of the options in~\cref{eq:valley-gs-rdm}; we now will argue that $\Pi P(\vk, \vk) \Pi = \Pi  P(\vk_{*}, \vk_{*}) \Pi $ for all $\vk \in \mc{K}$.
Due to $\mc{Q}$ symmetry, the pair product $\Pi \rho_{\vk,\vk+\vq}^{V}(\vr) \Pi$ has the same form as in~\cref{eq:pair-product-form} for all $\vk, \vk + \vq$~\cite{BultinckKhalafLiuEtAl2020}. Therefore, the pair product commutes with $\Pi  P(\vk_{*}, \vk_{*}) \Pi$ and so following the proof of~\cref{prop:pseudo-ferromagnetism}, we can conclude for a ground state $\Pi P(\vk, \vk) \Pi = \Pi  P(\vk_{*}, \vk_{*}) \Pi$ for all $\vk \in \mc{K}$.
  
  To complete the proof, we must also verify that that~\cref{eq:hf-trace-condition} vanishes.
  Since $[ U, \Pi \rho_{\vk,\vk}^{V}(\vr) \Pi ] = 0$ and $\Pi \rho_{\vk,\vk}^{V}(\vr) \Pi = I_{2 \times 2} \otimes \rho_{\vk,\vk}(\vr)$, the sum rule~\cref{eq:sum-rule} implies~\cref{eq:hf-trace-condition} vanishes.
The statement of the theorem follows reordering the basis by the permutation $\Pi$.
\end{proof}

\begin{remark}
We shall compare our result with statements in the physics literature.
  \begin{enumerate}
  \item In the physics literature, the structure~\cref{eq:commutator-q0-real-pi} is often expressed as $[\sigma_z \tau_z,P(\vk,\vk)]=0$ where $\sigma_{z}$ denotes the Pauli Z-matrix acting on the sublattice degree of freedom and $\tau_{z}$ denotes the Pauli Z-matrix acting on the valley degree of freedom.
    The operator $\sigma_z \tau_z$ is called the Chern number operator.
  \item In the physics literature, the ${\rm U}(2)\times {\rm U}(2)$ (or  ${\rm U}(4)\times {\rm U}(4)$) degree of freedom is often stated as a hidden symmetry of the FBI Hamiltonian.
    The two perspectives are equivalent.
    This is because if $\ket{\Psi}$ is a ground state of $\hat{H}_{\rm FBI}$ and $\mc{G}$ is a symmetry satisfying $[\mc{G}, \hat{H}_{FBI}]=0$, then $\mc{G}\ket{\Psi}$ is also a ground state.

  \item Since $U$ is block diagonal, the orbits of $\diag(1, 0, 0, 1)$ and $\diag(0, 1, 1, 0)$ only contain one element.
    These two states are referred to as the quantum hall (QH) states as they occupy the positive eigenspace and negative eigenspace of the Chern number operator $\sigma_{z} \tau_{z}$.
    In contrast, the third orbit contains a manifold of states which include states which occupy both valleys; such states are referred to as the intervalley coherent (IVC) states.
  \end{enumerate}
\end{remark}

Using the same proof strategy, we can easily extend~\cref{thm:tbg-2-valley} to include spin:
\begin{theorem}
  \label{thm:tbg-2-valley-spin}
  At TBG-2 at half-filling with the valley and spin degrees of freedom, the Hartree-Fock ground states take the form
  \begin{equation}
    P(\vk, \vk')
    =
    \begin{cases}
      U P_{0} U^{\dagger} & \vk = \vk' \\
      0 & \vk \neq \vk'
    \end{cases}
  \end{equation}
  where $U$ and $P_{0}$ are the same for all $\vk \in \mc{K}$.
  Furthermore, $U \in {\mathrm U}(4) \times {\mathrm U}(4)$ and $P_{0}$ is equal to one of the following matrices $\diag(1,0,0,1, 1,0,0,1)$, $\diag(1,1,0,1, 1,0,0,0)$, $\diag(1,1,1,1,0,0,0,0)$, $\diag(0,0,1,0, 0,1,1,1)$, or $\diag(0,1,1,0, 0,1,1,0)$.
\end{theorem}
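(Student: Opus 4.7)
The plan is to repeat the strategy of \cref{thm:tbg-2-valley} with the $2$-dimensional blocks replaced by $4$-dimensional ones, exploiting the tensor identity $\rho_{\vk,\vk}^{\rm V+S}(\vr) = \rho_{\vk,\vk}^{\rm V}(\vr) \otimes I_{2 \times 2}$, which makes the spin doublet essentially decouple from the algebra driving the valleyful proof.

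First I would show $P(\vk, \vk') = 0$ for all $\vk \neq \vk'$. Writing $P(\vk, \vk')$ as a $4 \times 4$ array of $2 \times 2$ spin blocks, the commutator condition \cref{eq:commutator-q0-real} decouples blockwise into the scalar equations
\[
  \bigl(d_{jj}^{\vk'}(\vr) - d_{ii}^{\vk}(\vr)\bigr)\,[P(\vk,\vk')]_{ij} = 0,
\]
where $d_{ii}^{\vk}(\vr)$ denotes the $i$-th diagonal entry of $\rho_{\vk,\vk}^{\rm V}(\vr)$. This is exactly the scalar Sylvester condition encountered in the proof of \cref{thm:tbg-2}, so evaluating at $\vr_{0} = \tfrac{4\pi}{3\sqrt{3}}(k_{2},-k_{1})^{\top}$ (where $u_{\vk}$ vanishes) and then invoking \cref{prop:pseudo-ferromagnetism} to handle the $\vk' = -\vk$ case forces each $2 \times 2$ spin block, and hence all of $P(\vk,\vk')$, to vanish.

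For the diagonal case, I would introduce the spinful analog $\tilde{\Pi}$ of the permutation $\Pi$ from \cref{thm:tbg-2-valley} so that
\[
  \tilde{\Pi}\,\rho_{\vk,\vk}^{\rm V+S}(\vr)\,\tilde{\Pi}
  = \begin{bmatrix} \|u_{\vk}(\vr)\|^{2}\,I_{4\times 4} & 0 \\ 0 & \|u_{\vk}(-\vr)\|^{2}\,I_{4\times 4} \end{bmatrix}.
\]
Picking any $\vk_{*} \in \mc{K}$ at which $\|u_{\vk_{*}}(\vr)\| \neq \|u_{\vk_{*}}(-\vr)\|$ for some $\vr \in \Omega$, the same commutator computation as in the valleyful case forces the off-diagonal $4 \times 4$ block of $\tilde{\Pi} P(\vk_{*}, \vk_{*}) \tilde{\Pi}$ to vanish. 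The unitaries preserving this block-diagonal structure together with the projector property form ${\rm U}(4) \times {\rm U}(4)$, acting independently on the two $4 \times 4$ blocks.

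Finally, since $P(\vk_{*}, \vk_{*})$ is a rank-$4$ projector (half-filling together with \cref{coro:uniform-filling}), writing $\tilde{\Pi} P(\vk_{*}, \vk_{*}) \tilde{\Pi} = P_{1} \oplus P_{2}$ with $\Tr(P_{1}) + \Tr(P_{2}) = 4$, the ${\rm U}(4) \times {\rm U}(4)$ orbits are indexed by the rank pair $(\Tr(P_{1}), \Tr(P_{2})) \in \{(0,4), (1,3), (2,2), (3,1), (4,0)\}$, giving exactly five orbits; conjugating the canonical diagonal representatives back by $\tilde{\Pi}^{-1}$ recovers the five matrices stated in the theorem. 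Generalized ferromagnetism propagates this choice to all $\vk \in \mc{K}$, and \cref{eq:hf-trace-condition} is verified via the sum rule \cref{eq:sum-rule} exactly as in the previous theorem, since $\tilde{\Pi}\rho_{\vk,\vk}^{\rm V+S}\tilde{\Pi}$ acts trivially on the spin factor and thus commutes with the full ${\rm U}(4) \times {\rm U}(4)$ action. The main bookkeeping obstacle will be writing $\tilde{\Pi}$ down explicitly and verifying that the five canonical rank-split representatives, after conjugation by $\tilde{\Pi}^{-1}$, coincide with the specific matrices $\diag(1,0,0,1,1,0,0,1)$, $\diag(1,1,0,1,1,0,0,0)$, etc.\ listed in the statement; every other step is a direct transcription of the valleyful proof with each ``$2$'' replaced by ``$4$''.
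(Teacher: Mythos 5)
Your proposal is correct and realizes exactly the strategy the paper intends (the paper itself just says ``Using the same proof strategy, we can easily extend'' Theorem~\ref{thm:tbg-2-valley}): off-diagonal $P(\vk,\vk')$ are killed as in the spinless case, the $\vq=\vzero$ Sylvester equation forces block-diagonality with respect to the two $4$-dimensional eigenspaces of $\rho^{\rm V+S}_{\vk_*,\vk_*}(\vr)$, half-filling pins the total rank at $4$, the ${\rm U}(4)\times{\rm U}(4)$ orbits of such projectors are classified by the rank split $(r_1,r_2)$ with $r_1+r_2=4$, and generalized ferromagnetism plus the sum rule finish the job.

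One concrete heads-up on the bookkeeping step you flagged: the five representatives listed in the theorem, e.g.\ $\diag(1,0,0,1,1,0,0,1)$ and $\diag(0,1,1,0,0,1,1,0)$, have rank splits $(4,0),(3,1),(2,2),(1,3),(0,4)$ only if the ``$a$'' eigenspace of $\rho^{\rm V+S}_{\vk_*,\vk_*}(\vr)$ is spanned by $\{e_1,e_4,e_5,e_8\}$, which corresponds to the Kronecker ordering $I_{2\times2}\otimes\rho^{\rm V}_{\vk,\vk}(\vr)$ (valley/sublattice inner, spin outer). The paper's displayed formula literally reads $\rho^{\rm V}_{\vk,\vk}(\vr)\otimes I_{2\times2}$, under which the ``$a$'' block is $\{e_1,e_2,e_7,e_8\}$ and all five listed matrices would then lie in the single $(2,2)$ orbit. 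So when you write $\tilde\Pi$ down explicitly, you should use the index convention implicitly fixed by the theorem's list of $P_0$'s (equivalently, the opposite Kronecker convention from the one suggested by the paper's $\otimes$ notation); with that adjustment the five canonical representatives conjugate exactly to the five stated matrices and the rest of your argument goes through unchanged.
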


\subsection{An Algorithm for Determining Translation Symmetry Breaking}
\label{sec:alg-for-translation-breaking}
In cases when an analytical form of the form factor is not known,~\cref{eq:commutator-q0} can be used to design an algorithm for identifying when translation symmetry breaking can occur.
From the vectorization identity, we can convert \cref{eq:commutator-q0} into a vector equation as follows
\begin{equation}
  \label{eq:vec-commutator-q0}
  \left( \Lambda_{\vk'}(\vG)^{\top} \otimes I_{(2M) \times (2M)} - I_{(2M) \times (2M)} \otimes \Lambda_{\vk}(\vG) \right) \operatorname{vec}{(P(\vk, \vk'))} = 0 \quad \forall \vG \in \Gamma^*,
\end{equation}
where $\operatorname{vec}{(\cdot)}$ transforms a $(2M) \times (2M)$ matrix into a vector with $(2M)^{2}$ entries by stacking the columns.
Define the $(2M)^{2} \times (2M)^{2}$ matrix $M_{\vk,\vk'}(\vG)$:
\begin{equation}
  \label{eq:mkk_g}
  M_{\vk,\vk'}(\vG) := \Lambda_{\vk'}(\vG)^{\top} \otimes I_{(2M) \times (2M)} - I_{(2M) \times (2M)} \otimes \Lambda_{\vk}(\vG).
\end{equation}
Then $P(\vk, \vk')$ satisfies~\cref{eq:commutator-q0} if and only if its vectorization lies in the kernel $M_{\vk,\vk'}(\vG)$ for all $\vG \in \Gamma^{*}$.
We can further reduce our computational effort by squaring $M_{\vk,\vk'}(\vG)$ and summing over $\vG$:
\begin{equation}
  \label{eq:vec-matrix}
  M_{\vk, \vk'} := \sum_{\vG \in \Gamma^{*}} M_{\vk,\vk'}(\vG)^{\dagger} M_{\vk,\vk'}(\vG).
\end{equation}
The following proposition is now immediate
\begin{proposition}
  \label{prop:translation-breaking-alg}
  The matrix $M_{\vk, \vk'}$ is positive semidefinite and its kernel is exactly the intersection of the kernels of all the $M_{\vk,\vk'}(\vG)$.
\end{proposition}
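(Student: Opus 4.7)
The plan is to observe that $M_{\vk,\vk'}$ is a sum of matrices of the form $A^{\dagger} A$, and to exploit the well-known fact that such sums vanish on a vector if and only if each summand does. Concretely, I would first verify positive semidefiniteness term-by-term, and then use the fact that positive semidefinite matrices have vanishing quadratic form only on their kernel.

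For the first part, I would note that for any fixed $\vG \in \Gamma^{*}$ and any vector $v \in \CC^{(2M)^{2}}$,
\begin{equation*}
\langle v, M_{\vk,\vk'}(\vG)^{\dagger} M_{\vk,\vk'}(\vG) v \rangle = \| M_{\vk,\vk'}(\vG) v \|^{2} \geq 0,
\end{equation*}
so each summand in~\cref{eq:vec-matrix} is positive semidefinite and hence so is the sum $M_{\vk,\vk'}$. For the kernel characterization, the inclusion $\bigcap_{\vG} \ker M_{\vk,\vk'}(\vG) \subseteq \ker M_{\vk,\vk'}$ is immediate from linearity: if $M_{\vk,\vk'}(\vG) v = 0$ for every $\vG$, then each summand kills $v$ and so does the sum. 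The reverse inclusion is the substantive direction: if $v \in \ker M_{\vk,\vk'}$, then
\begin{equation*}
0 = \langle v, M_{\vk,\vk'} v \rangle = \sum_{\vG \in \Gamma^{*}} \| M_{\vk,\vk'}(\vG) v \|^{2}.
\end{equation*}
Since each summand on the right is nonnegative, all of them must vanish, so $M_{\vk,\vk'}(\vG) v = 0$ for every $\vG \in \Gamma^{*}$, placing $v$ in the intersection of kernels.

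The only potential obstacle is the convergence of the infinite sum over $\Gamma^{*}$ defining $M_{\vk,\vk'}$; this is not really an issue, however, since the form factors $\Lambda_{\vk}(\vG)$ are Fourier coefficients of the smooth periodic pair product $\rho_{\vk,\vk'}(\vr)$ (see~\cref{eq:rho-fourier-transform}) and hence decay faster than any polynomial in $|\vG|$, guaranteeing absolute convergence of~\cref{eq:vec-matrix}. In practice one may also truncate to a finite subset of $\Gamma^{*}$, in which case no convergence issue arises at all and the argument above applies verbatim.
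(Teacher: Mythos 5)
Your proof is correct and follows exactly the argument the paper has in mind (the paper labels the proposition ``immediate'' and gives no further detail): each summand $M_{\vk,\vk'}(\vG)^{\dagger}M_{\vk,\vk'}(\vG)$ is PSD, and for a sum of PSD matrices the quadratic form vanishes precisely when every summand annihilates the vector. Your remark on absolute convergence via Fourier decay of the form factors matches the paper's own discussion following the proposition.
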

Since the form factors are the Fourier coefficients of a smooth function (\cref{eq:rho-fourier-transform}) we have that
\begin{equation}
  \sum_{\{ \vG \in \Gamma^* : |\vG| > L \}} \| \Lambda_{\vk'}(\vG) \|_{F}^{2} \rightarrow 0 \quad \text{as} \quad L \rightarrow \infty
\end{equation}
Hence, in practice we can truncate the sum over $\vG$ in~\cref{eq:vec-matrix} and by the Weyl bound we can find a spanning set for all $P(\vk, \vk')$ lying in the kernel of $M_{\vk,\vk'}$ by taking a few of the smallest eigenmodes of the truncated matrix.
\begin{remark}
  In a discretized setting, \cref{prop:translation-breaking-alg} also gives us a computationally efficient method for verifying the ``no common invariant subspace'' condition in \cite[Theorem 4]{BeckerLinStubbs2023}.
\end{remark}

\bibliographystyle{plain}
\bibliography{bibliography}

\begin{thebibliography}{10}

\bibitem{BeckerEmbreeWittstenEtAl2021}
Simon Becker, Mark Embree, Jens Wittsten, and Maciej Zworski.
\newblock Spectral characterization of magic angles in twisted bilayer
  graphene.
\newblock {\em Phys. Rev. B}, 103(16):165113, 2021.

\bibitem{BeckerEmbreeWittstenEtAl2022}
Simon Becker, Mark Embree, Jens Wittsten, and Maciej Zworski.
\newblock Mathematics of magic angles in a model of twisted bilayer graphene.
\newblock {\em Prob. Math. Phys.}, 3(1):69--103, 2022.

\bibitem{BeckerHumbertZworski2022a}
Simon Becker, Tristan Humbert, and Maciej Zworski.
\newblock Fine structure of flat bands in a chiral model of magic angles.
\newblock {\em arXiv:2208.01628}, 2022.

\bibitem{BeckerLinStubbs2023}
Simon Becker, Lin Lin, and Kevin~D. Stubbs.
\newblock Exact ground state of interacting electrons in magic angle graphene,
  December 2023.

\bibitem{BernevigSongRegnaultEtAl2021}
B.~Andrei Bernevig, Zhi-Da Song, Nicolas Regnault, and Biao Lian.
\newblock {Twisted bilayer graphene. III. Interacting Hamiltonian and exact
  symmetries}.
\newblock {\em Phys. Rev. B}, 103(20):205413, 2021.

\bibitem{Bhatia1997}
Rajendra Bhatia.
\newblock {\em Matrix {{Analysis}}}, volume 169 of {\em Graduate {{Texts}} in
  {{Mathematics}}}.
\newblock {Springer New York}, {New York, NY}, 1997.

\bibitem{BistritzerMacDonald2011}
R.~Bistritzer and A.~H. MacDonald.
\newblock Moire bands in twisted double-layer graphene.
\newblock {\em Proceedings of the National Academy of Sciences},
  108(30):12233--12237, 2021.

\bibitem{BultinckKhalafLiuEtAl2020}
Nick Bultinck, Eslam Khalaf, Shang Liu, Shubhayu Chatterjee, Ashvin Vishwanath,
  and Michael~P. Zaletel.
\newblock Ground {{State}} and {{Hidden Symmetry}} of {{Magic-Angle Graphene}}
  at {{Even Integer Filling}}.
\newblock {\em Phys. Rev. X}, 10(3):031034, 2020.

\bibitem{CancesGarrigueGontier2023}
\'Eric Canc\`es, Louis Garrigue, and David Gontier.
\newblock Simple derivation of moir\'e-scale continuous models for twisted
  bilayer graphene.
\newblock {\em Phys. Rev. B}, 107:155403, Apr 2023.

\bibitem{2018Nature}
Yuan {Cao}, Valla {Fatemi}, Shiang {Fang}, Kenji {Watanabe}, Takashi
  {Taniguchi}, Efthimios {Kaxiras}, and Pablo {Jarillo-Herrero}.
\newblock {Unconventional superconductivity in magic-angle graphene
  superlattices}.
\newblock {\em Nature}, 556(7699):43--50, 2018.

\bibitem{ChatterjeeBultinckZaletel2020}
Shubhayu Chatterjee, Nick Bultinck, and Michael~P. Zaletel.
\newblock Symmetry breaking and skyrmionic transport in twisted bilayer
  graphene.
\newblock {\em Phys. Rev. B}, 101(16), apr 2020.

\bibitem{DasLuHerzog-Arbeitman2021}
Ipsita Das, Xiaobo Lu, Jonah Herzog-Arbeitman, Zhi-Da Song, Kenji Watanabe,
  Takashi Taniguchi, B.~Andrei Bernevig, and Dmitri~K. Efetov.
\newblock Symmetry-broken chern insulators and rashba-like landau-level
  crossings in magic-angle bilayer graphene.
\newblock {\em Nat. Phys.}, 17(6):710--714, mar 2021.

\bibitem{FaulstichStubbsZhuEtAl2023}
Fabian~M Faulstich, Kevin~D Stubbs, Qinyi Zhu, Tomohiro Soejima, Rohit Dilip,
  Huanchen Zhai, Raehyun Kim, Michael~P Zaletel, Garnet Kin-Lic Chan, and Lin
  Lin.
\newblock Interacting models for twisted bilayer graphene: A quantum chemistry
  approach.
\newblock {\em Phys. Rev. B}, 107(23):235123, 2023.

\bibitem{JiangLaiWatanabe2019}
Yuhang Jiang, Xinyuan Lai, Kenji Watanabe, Takashi Taniguchi, Kristjan Haule,
  Jinhai Mao, and Eva~Y. Andrei.
\newblock Charge order and broken rotational symmetry in magic-angle twisted
  bilayer graphene.
\newblock {\em Nature}, 573(7772):91--95, jul 2019.

\bibitem{LianSongRegnaultEtAl2021}
Biao Lian, Zhi-Da Song, Nicolas Regnault, Dmitri~K. Efetov, Ali Yazdani, and
  B.~Andrei Bernevig.
\newblock {Twisted bilayer graphene. IV. Exact insulator ground states and
  phase diagram}.
\newblock {\em Phys. Rev. B}, 103(20):205414, 2021.

\bibitem{LiuKhalafLee2021}
Shang Liu, Eslam Khalaf, Jong~Yeon Lee, and Ashvin Vishwanath.
\newblock Nematic topological semimetal and insulator in magic-angle bilayer
  graphene at charge neutrality.
\newblock {\em Phys. Rev. Res.}, 3(1), jan 2021.

\bibitem{LiuKhalafLeeEtAl2021}
Shang Liu, Eslam Khalaf, Jong~Yeon Lee, and Ashvin Vishwanath.
\newblock Nematic topological semimetal and insulator in magic-angle bilayer
  graphene at charge neutrality.
\newblock {\em Phys. Rev. Res.}, 3(1):013033, 2021.

\bibitem{nuckolls2023quantum}
Kevin~P Nuckolls, Ryan~L Lee, Myungchul Oh, Dillon Wong, Tomohiro Soejima,
  Jung~Pyo Hong, Dumitru C{\u{a}}lug{\u{a}}ru, Jonah Herzog-Arbeitman, B~Andrei
  Bernevig, Kenji Watanabe, et~al.
\newblock Quantum textures of the many-body wavefunctions in magic-angle
  graphene.
\newblock {\em Nature}, 620(7974):525--532, 2023.

\bibitem{PotaszXieMacDonald2021}
Pawel Potasz, Ming Xie, and Allan~H. MacDonald.
\newblock Exact {{Diagonalization}} for {{Magic-Angle Twisted Bilayer
  Graphene}}.
\newblock {\em Phys. Rev. Lett.}, 127(14):147203, 2021.

\bibitem{SaitoGeRademaker2021}
Yu~Saito, Jingyuan Ge, Louk Rademaker, Kenji Watanabe, Takashi Taniguchi,
  Dmitry~A. Abanin, and Andrea~F. Young.
\newblock Hofstadter subband ferromagnetism and symmetry-broken chern
  insulators in twisted bilayer graphene.
\newblock {\em Nat. Phys.}, 17(4):478--481, jan 2021.

\bibitem{NatSaito}
Yu~{Saito}, Jingyuan {Ge}, Kenji {Watanabe}, Takashi {Taniguchi}, and Andrea~F.
  {Young}.
\newblock {Independent superconductors and correlated insulators in twisted
  bilayer graphene}.
\newblock {\em Nature Physics}, 16(9):926--930, June 2020.

\bibitem{SoejimaParkerBultinckEtAl2020}
Tomohiro Soejima, Daniel~E. Parker, Nick Bultinck, Johannes Hauschild, and
  Michael~P. Zaletel.
\newblock Efficient simulation of moire materials using the density matrix
  renormalization group.
\newblock {\em Phys. Rev. B}, 102(20):205111, 2020.

\bibitem{TarnopolskyKruchkovVishwanath2019}
Grigory Tarnopolsky, Alex~Jura Kruchkov, and Ashvin Vishwanath.
\newblock Origin of magic angles in twisted bilayer graphene.
\newblock {\em Phys. Rev. Let.}, 122(10):106405, 2019.

\bibitem{tseng2022anomalous}
Chun-Chih Tseng, Xuetao Ma, Zhaoyu Liu, Kenji Watanabe, Takashi Taniguchi,
  Jiun-Haw Chu, and Matthew Yankowitz.
\newblock Anomalous hall effect at half filling in twisted bilayer graphene.
\newblock {\em Nature Physics}, 18(9):1038--1042, 2022.

\bibitem{wagner2022global}
Glenn Wagner, Yves~H Kwan, Nick Bultinck, Steven~H Simon, and SA~Parameswaran.
\newblock Global phase diagram of the normal state of twisted bilayer graphene.
\newblock {\em Physical Review Letters}, 128(15):156401, 2022.

\bibitem{WatsonKongMacDonaldEtAl2022}
Alexander~B. {Watson}, Tianyu {Kong}, Allan~H. {MacDonald}, and Mitchell
  {Luskin}.
\newblock {Bistritzer-MacDonald dynamics in twisted bilayer graphene}.
\newblock {\em Journal of Mathematical Physics}, 64(3):031502, March 2023.

\bibitem{WatsonLuskin2021}
Alexander~B Watson and Mitchell Luskin.
\newblock Existence of the first magic angle for the chiral model of bilayer
  graphene.
\newblock {\em J. Math. Phys.}, 62(9):091502, 2021.

\bibitem{WuSarma2020}
Fengcheng Wu and Sankar~Das Sarma.
\newblock Collective excitations of quantum anomalous hall ferromagnets in
  twisted bilayer graphene.
\newblock {\em Phys. Rev. Lett.}, 124(4), jan 2020.

\bibitem{xie2023phase}
Fang Xie, Jian Kang, B~Andrei Bernevig, Oskar Vafek, and Nicolas Regnault.
\newblock Phase diagram of twisted bilayer graphene at filling factor
  $\nu$=$\pm$3.
\newblock {\em Physical Review B}, 107(7):075156, 2023.

\bibitem{XieMacDonald2020}
Ming Xie and A.~H. MacDonald.
\newblock Nature of the {{Correlated Insulator States}} in {{Twisted Bilayer
  Graphene}}.
\newblock {\em Phys. Rev. Lett.}, 124(9):097601, 2020.

\bibitem{XieLianJackEtAl2019}
Yonglong Xie, Biao Lian, Berthold J{\"a}ck, Xiaomeng Liu, Cheng-Li Chiu, Kenji
  Watanabe, Takashi Taniguchi, B~Andrei Bernevig, and Ali Yazdani.
\newblock Spectroscopic signatures of many-body correlations in magic-angle
  twisted bilayer graphene.
\newblock {\em Nature}, 572(7767):101--105, 2019.

\bibitem{zhou2024kondo}
Geng-Dong Zhou, Yi-Jie Wang, Ninghua Tong, and Zhi-Da Song.
\newblock Kondo phase in twisted bilayer graphene.
\newblock {\em Physical Review B}, 109(4):045419, 2024.

\end{thebibliography}

\appendix

\section{Calculation of the Hartree-Fock Energy}
\label{sec:calc-hf-energy}
To perform this calculation, we will recall a few properties of the form factor which will play a role in the following calculations (see, e.g. \cite[Lemma 4.2]{BeckerLinStubbs2023} for proofs):
\begin{lemma}
  \label{lem:form-factor-identities}
  The form factor matrix $\Lambda_{\vk}(\vq + \vG)$ satisfies the following identities for all $\vk,\vq \in \mc{K}$ and all $\vG, \vG' \in \Gamma^*$
  \begin{align}
    \Lambda_{\vk}(\vq + \vG)^{\dagger} &  = \Lambda_{\vk+\vq}(-\vq -\vG), \label{eq:form-factor-dagger} \\
    \Lambda_{\vk + \vG'}(\vq + \vG) & = \Lambda_{\vk}(\vq + \vG). \label{eq:form-factor-shift}
  \end{align}
\end{lemma}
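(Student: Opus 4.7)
\medskip

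The plan is to prove both identities directly from the definition of the form factor in \eqref{eq:form-factor-def}, with the single nontrivial ingredient being the gauge relation $\hat{u}_{n(\vk+\vG')}(\vG'') = \hat{u}_{n\vk}(\vG''+\vG')$ for all $\vG' \in \Gamma^*$. This relation follows from the convention $\psi_{n(\vk+\vG')} = \psi_{n\vk}$ (Bloch functions are labelled modulo $\Gamma^*$ translations), which gives $u_{n(\vk+\vG')}(\vr) = e^{-i\vG'\cdot\vr} u_{n\vk}(\vr)$ in real space, and hence the claimed shift of the Fourier index after evaluating $\hat{u}_{n(\vk+\vG')}(\vG'') = \int_{\Omega} e^{-i\vG''\cdot\vr} u_{n(\vk+\vG')}(\vr)\ud\vr$. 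I would state this gauge relation as a preliminary remark.

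For the lattice-shift identity $\Lambda_{\vk+\vG'}(\vq+\vG) = \Lambda_{\vk}(\vq+\vG)$, I would substitute $\vk \mapsto \vk+\vG'$ into the definition of $\Lambda$, apply the gauge relation to both $\hat{u}_{m(\vk+\vG')}(\vG'')$ and $\hat{u}_{n(\vk+\vG'+\vq+\vG)}(\vG'')$, and then perform the change of summation variable $\vG'' \mapsto \vG''-\vG'$ in the sum over $\Gamma^*$. Since the sum is over the full lattice $\Gamma^*$, this shift is harmless and the resulting expression is exactly $[\Lambda_{\vk}(\vq+\vG)]_{mn}$.

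For the conjugation identity $\Lambda_{\vk}(\vq+\vG)^{\dagger} = \Lambda_{\vk+\vq}(-\vq-\vG)$, I would first write
\[
[\Lambda_{\vk}(\vq+\vG)^{\dagger}]_{mn} = \overline{[\Lambda_{\vk}(\vq+\vG)]_{nm}} = \frac{1}{|\Omega|}\sum_{\vG''\in\Gamma^*} \braket{\hat{u}_{m(\vk+\vq+\vG)}(\vG''),\, \hat{u}_{n\vk}(\vG'')},
\]
using Hermitian symmetry of the inner product on $\CC^{2N}$. Then I would apply the gauge relation to rewrite $\hat{u}_{m(\vk+\vq+\vG)}(\vG'') = \hat{u}_{m(\vk+\vq)}(\vG''+\vG)$ and similarly $\hat{u}_{n\vk}(\vG'') = \hat{u}_{n(\vk+\vq)}\bigl((\vG''-\vq)+\ldots\bigr)$—actually cleaner is to reshift only the first slot and then change summation variable $\vG'' \mapsto \vG''-\vG$, matching the resulting expression against the definition of $[\Lambda_{\vk+\vq}(-\vq-\vG)]_{mn}$ slot-by-slot.

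I do not anticipate any genuine obstacles: the proof is essentially a bookkeeping exercise in indices. The only place where care is needed is ensuring the relabelling of $\vG''$ by a $\Gamma^*$-shift is legitimate, which it is because the sum is unconditional over the full reciprocal lattice. No further structure of the Bistritzer--MacDonald Hamiltonian, no flatness, and no symmetry beyond the Bloch labelling convention enters the proof.
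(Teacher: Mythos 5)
Your proposal is correct: both identities follow exactly as you describe from the definition \eqref{eq:form-factor-def} together with the periodic labelling convention $\psi_{n(\vk+\vG')}=\psi_{n\vk}$ (equivalently $\hat{u}_{n(\vk+\vG')}(\vG'')=\hat{u}_{n\vk}(\vG''+\vG')$), and the paper itself does not reprove the lemma but defers to \cite[Lemma 4.2]{BeckerLinStubbs2023}, where the argument is this same direct index bookkeeping. Your one wobble --- the momentary attempt to shift the second slot by $\vq$, which is not a reciprocal lattice vector and so not covered by the gauge relation --- you correctly abandoned in favour of shifting only by $\vG$ and relabelling the $\Gamma^*$ sum, so the final plan is sound.
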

Throughout this calculation, we will suppress the summation indexes for $\vq', \vk, \vk', m, n, m', n'$.
Expanding the terms in $\widehat{\rho}(\vq')$ and performing normal ordering gives 
\begin{equation}
  \label{eq:h-fbi-alt}
  \begin{split}
    N_{\vk} |\Omega| \hat{H}_{\rm FBI} = 
    & \sum_{\vq'} \sum_{\vk, \vk'} \sum_{m,n,m',n'} \hat{V}(\vq') [\Lambda_{\vk}(\vq')]_{mn} [\Lambda_{\vk'}(-\vq')]_{m'n'} \hat{f}_{m\vk}^{\dagger} \hat{f}_{m'\vk'}^{\dagger} \hat{f}_{n'(\vk' - \vq')} \hat{f}_{n(\vk + \vq')}  \\[1ex]
    & + \sum_{\vq'}  \sum_{\vk'} \sum_{m,n} \hat{V}(\vq') [\Lambda_{\vk + \vq'}(\vq') \Lambda_{\vk'}(-\vq')]_{mn} \hat{f}_{m\vk'}^{\dagger} \hat{f}_{n\vk'} \\
    & - \frac{1}{2} \sum_{\vq'} \delta_{\vq' \in \Gamma^*} \hat{V}(\vq') \left( \sum_{\vk'} \Tr{(\Lambda_{\vk'}(-\vq'))} \right) \left( \sum_{\vk} \sum_{m,n} [\Lambda_{\vk}(\vq')]_{mn} \hat{f}_{m\vk}^{\dagger} \hat{f}_{n\vk} \right) \\
    & - \frac{1}{2} \sum_{\vq'} \delta_{\vq' \in \Gamma^*} \hat{V}(\vq') \left( \sum_{\vk} \Tr{(\Lambda_{\vk}(\vq'))} \right) \left( \sum_{\vk'} \sum_{m',n'} [\Lambda_{\vk'}(-\vq')]_{mn} \hat{f}_{m'\vk'}^{\dagger} \hat{f}_{n'\vk'} \right) \\
    & + \frac{1}{4} \sum_{\vq'} \delta_{\vq' \in \Gamma^* } \hat{V}(\vq') \left(\sum_{\vk} \Tr{(\Lambda_{\vk}(\vq'))} \right) \left( \sum_{\vk'} \Tr{(\Lambda_{\vk'}(-\vq'))} \right).
  \end{split}
\end{equation}

By Wick's theorem, we can evaluate the energy of any Hartree-Fock state, $\ket{\Psi}$ in terms of the one-body reduced density matrix $[P(\vk',\vk)]_{nm} := \braket{\Phi | \hat{f}_{m\vk}^{\dagger} \hat{f}_{n\vk'} | \Phi}$.
In particular, we have
\begin{equation}
  \begin{split}
    N_{\vk} |\Omega| & \langle\Psi | \hat{H}_{\rm FBI} | \Psi \rangle \\
    = & \sum_{\vq'} \hat{V}(\vq') \sum_{\vk, \vk'} \sum_{m,n,m',n'} [\Lambda_{\vk}(\vq')]_{mn} [\Lambda_{\vk'}(-\vq')]_{m'n'} [P(\vk + \vq', \vk)]_{nm} [P(\vk' - \vq', \vk')]_{n'm'} \\
                     & - \sum_{\vq'} \hat{V}(\vq') \sum_{\vk, \vk' } \sum_{m,n,m',n'} [\Lambda_{\vk}(\vq')]_{mn} [\Lambda_{\vk'}(-\vq')]_{m'n'} [P(\vk' -\vq', \vk)]_{n'm} [P(\vk + \vq', \vk')]_{nm'} \\
                     & - \sum_{\vq'} \hat{V}(\vq') \sum_{\vk'} \sum_{m,n} [\Lambda_{\vk' + \vq'}(\vq') \Lambda_{\vk'}(-\vq')]_{mn} [P(\vk',\vk')]_{nm} \\
                     & - \frac{1}{2} \sum_{\vq'} \delta_{\vq' \in \Gamma^*} \hat{V}(\vq') \left( \sum_{\vk'} \Tr{(\Lambda_{\vk'}(-\vq'))} \right) \left( \sum_{\vk} \sum_{m,n} [\Lambda_{\vk}(\vq')]_{mn} [P(\vk,\vk)]_{nm} \right) \\
                     & - \frac{1}{2} \sum_{\vq'} \delta_{\vq' \in \Gamma^*} \hat{V}(\vq') \left( \sum_{\vk} \Tr{(\Lambda_{\vk}(\vq'))} \right) \left( \sum_{\vk'} \sum_{m',n'} [\Lambda_{\vk'}(-\vq')]_{mn} [P(\vk',\vk')]_{n'm'} \right) \\
                     & + \frac{1}{4} \sum_{\vq'} \delta_{\vq' \in \Gamma^{*}} \hat{V}(\vq') \left(\sum_{\vk} \Tr{(\Lambda_{\vk}(\vq'))} \right) \left( \sum_{\vk'} \Tr{(\Lambda_{\vk'}(-\vq'))} \right)
  \end{split}
\end{equation}
which can cleanly be expressed in terms of traces as
\begin{equation}
  \begin{split}
    N_{\vk} |\Omega| & \langle\Psi | \hat{H}_{\rm FBI} | \Psi \rangle \\
    = &  \sum_{\vq'} \hat{V}(\vq') \sum_{\vk, \vk'} \Tr{\Big( \Lambda_{\vk}(\vq') P(\vk + \vq', \vk) \Big)} \Tr{\Big( \Lambda_{\vk'}(-\vq') P(\vk' - \vq', \vk') \Big)} \\
                     & -  \sum_{\vq'} \hat{V}(\vq') \sum_{\vk, \vk'} \Tr{\Big( \Lambda_{\vk}(\vq') P(\vk + \vq', \vk') \Lambda_{\vk'}(-\vq') P(\vk' -\vq', \vk)\Big)} \\
                     & - \sum_{\vq'} \hat{V}(\vq') \sum_{\vk'} \Tr{\Big(\Lambda_{\vk' + \vq'}(\vq') \Lambda_{\vk'}(-\vq') P(\vk',\vk') \Big)} \\
                     & - \frac{1}{2} \sum_{\vq'} \delta_{\vq' \in \Gamma^{*}}  \hat{V}(\vq') \left( \sum_{\vk'} \Tr{(\Lambda_{\vk'}(-\vq'))} \right) \left( \sum_{\vk} \Tr{(\Lambda_{\vk}(\vq') P(\vk,\vk))} \right) \\
                     & - \frac{1}{2} \sum_{\vq'} \delta_{\vq' \in \Gamma^{*}}  \hat{V}(\vq') \left( \sum_{\vk} \Tr{(\Lambda_{\vk}(\vq'))} \right) \left( \sum_{\vk'} \Tr{(\Lambda_{\vk'}(-\vq') P(\vk',\vk'))} \right) \\
                     & + \frac{1}{4} \sum_{\vq'}  \delta_{\vq' \in \Gamma^{*}} \hat{V}(\vq') \left(\sum_{\vk} \Tr{(\Lambda_{\vk}(\vq'))} \right) \left( \sum_{\vk'} \Tr{(\Lambda_{\vk'}(-\vq'))} \right).
  \end{split}
\end{equation}
Recalling the definition of the matrix $Q(\vk, \vk')$ (\cref{eq:q-def}), we can combine the first term and the last three terms to get 
\begin{equation}
  \begin{split}
    N_{\vk} |\Omega| \braket{\Psi | \hat{H}_{\rm FBI} | \Psi} =
    &  \sum_{\vq'} \hat{V}(\vq') \sum_{\vk, \vk'} \Tr{\Big( \Lambda_{\vk}(\vq') Q(\vk + \vq', \vk) \Big)} \Tr{\Big( \Lambda_{\vk'}(-\vq') Q(\vk' - \vq', \vk') \Big)} \\
    & -  \sum_{\vq'} \hat{V}(\vq') \sum_{\vk, \vk'} \Tr{\Big( \Lambda_{\vk}(\vq') P(\vk + \vq', \vk') \Lambda_{\vk'}(-\vq') P(\vk' -\vq', \vk)\Big)} \\
    & - \sum_{\vq'} \hat{V}(\vq') \sum_{\vk'} \Tr{\Big(\Lambda_{\vk' + \vq'}(\vq') \Lambda_{\vk'}(-\vq') P(\vk',\vk') \Big)}.
  \end{split}
\end{equation}
Using~\cref{eq:form-factor-dagger} and performing a change of variables $\vk' \mapsto \vk' - \vq'$ (this change of variables is valid due to \cref{eq:form-factor-shift}), we can combine the last two terms in the above equation to get
\begin{equation}
  \begin{split}
    N_{\vk} |\Omega|  \langle \Psi & | \hat{H}_{\rm FBI} | \Psi \rangle \\
    = & \sum_{\vq'} \hat{V}(\vq')  \sum_{\vk, \vk' \in \mc{K}} \Tr{\Big( \Lambda_{\vk}(\vq') Q(\vk + \vq', \vk) \Big)} \Tr{\Big( \Lambda_{\vk'}(-\vq') Q(\vk' - \vq', \vk') \Big)} \\
    & - \sum_{\vq'} \hat{V}(\vq') \sum_{\vk, \vk' \in \mc{K}} \Tr{\Big( \Lambda_{\vk}(\vq') Q(\vk + \vq', \vk') \Lambda_{\vk'}(-\vq') Q(\vk' -\vq', \vk) \Big)} \\
    & + \frac{1}{4} \sum_{\vq'} \hat{V}(\vq') \sum_{\vk} \Tr{\Big(\Lambda_{\vk}(\vq') \Lambda_{\vk}(\vq')^{\dagger} \Big)}. 
  \end{split}
\end{equation}
Finally, using the identity $\Lambda_{\vk'}(-\vq') = \Lambda_{\vk'-\vq'}(\vq')^{\dagger}$ and performing the change of variables in the first sum $\vk' \mapsto \vk' + \vq'$, we can finally write the energy of $\ket{\Psi}$ as
\begin{equation}
  \label{eq:hf-energy3}
  \begin{split}
    \langle \Psi  &| \hat{H}_{\rm FBI} | \Psi \rangle \\
    = & \frac{1}{N_{\vk} |\Omega|} \sum_{\vq' \in \Gamma^{*} + \mc{K}} \hat{V}(\vq') \left| \sum_{\vk \in \mc{K}} \Tr{\Big( \Lambda_{\vk}(\vq') Q(\vk + \vq', \vk) \Big)} \right|^{2} \\
    & + \frac{1}{4 N_{\vk} |\Omega|} \sum_{\vq' \in \Gamma^{*} + \mc{K}} \hat{V}(\vq') \sum_{\vk \in \mc{K}} \| \Lambda_{\vk}(\vq') \|_{F}^{2} \\
    & - \frac{1}{N_{\vk} |\Omega|} \sum_{\vq' \in \Gamma^{*} + \mc{K}} \hat{V}(\vq') \sum_{\vk, \vk' \in \mc{K}} \Tr{\Big( \Lambda_{\vk}(\vq') Q(\vk + \vq', \vk') \Lambda_{\vk'-\vq'}(\vq')^{\dagger} Q(\vk' -\vq', \vk) \Big)} 
  \end{split}
\end{equation}
which completes the calculation.

\section{Proof of~\cref{lem:trace-lemma}}
\label{sec:trace-lemma-proof}
First, we calculate
\begin{equation}
  \| [A, B] \|_{F}^{2} = \Tr{( [A, B] ([A, B])^{\dagger} )} = -\Tr{( [A, B] ([A^{\dagger}, B^{\dagger}]) )}.
\end{equation}
Now we expand the product of commutators
\begin{equation}
  \begin{split}
    [A, B] [ A^{\dagger}, B^{\dagger} ]
    & = (A B - B A)(A^{\dagger} B^{\dagger} - B^{\dagger} A^{\dagger})                                                                        \\
    & = A B A^{\dagger} B^{\dagger} - A B B^{\dagger} A^{\dagger} - B A  A^{\dagger} B^{\dagger} + B A B^{\dagger} A^{\dagger}                \\
    & = \left( A B A^{\dagger} B^{\dagger} + (A B A^{\dagger} B^{\dagger})^{\dagger} \right) - \left( A B B^{\dagger} A^{\dagger} + B A  A^{\dagger} B^{\dagger} \right).
  \end{split}
\end{equation}
Taking the trace of both sides let's us conclude that
\begin{equation}
  -\| [A, B] \|_{F}^{2} = 2 \Re{\left(\Tr{(A B A^{\dagger} B^{\dagger})}\right)} - \Tr{\left(A^{\dagger} A B B^{\dagger} + A  A^{\dagger} B^{\dagger} B\right)}. 
\end{equation}
Rearranging we conclude that
\begin{equation}
  \Re{\left(\Tr{(A B A^{\dagger} B^{\dagger})}\right)} = \frac{1}{2} \Tr{\Big(AA^{\dagger} B^{\dagger} B + A^{\dagger} A B B^{\dagger}\Big)} - \frac{1}{2} \| [A, B] \|_{F}^{2}
\end{equation}
which proves the lemma.

\end{document}